\documentclass[12pt,paper]{article}

\usepackage{amsmath,amssymb,amsthm,amsfonts,fullpage}
\usepackage{enumitem}  % for using the left marging in itemize
\usepackage{biblatex}
\usepackage{authblk}
\usepackage[hidelinks,pdfauthor={Abdelghaffar Chibloun}]{hyperref}
\hypersetup{
    colorlinks = true,
    citecolor  = blue,
    linkcolor  = blue,
    urlcolor   = blue
}
\usepackage{relsize}
\usepackage{color,xcolor,soul}
\providecommand{\keywords}[1]
{{\small\textbf{{Keywords--}} #1}}
\providecommand{\msc}[1]
{{\small\textbf{{Mathematics Subject Classification (2020) --}} #1}}

\theoremstyle{plain}
\newtheorem{theorem}{Theorem}[section]
\newtheorem{definition}{Definition}
\newtheorem{proposition}{Proposition}
\newtheorem{corollary}{Corollary}
\newtheorem{lemma}{Lemma}
\newtheorem{remark}{Remark}
\newtheorem{example}{Example}

\DeclareMathOperator{\aut}{Aut}

\DeclareMathOperator{\id}{Id}

\title{{Skew polycyclic over finite chain rings associated to trinomials}}

\author[1]{Maryam Bajalan\thanks{E-mail: \rm{maryam.bajalan@math.bas.bg}. Supported by the National Science Fund, Ministry of Education and Science of Bulgaria, under contract KP-06-N92/2.}}
\author[2]{Edgar Mart\'inez-Moro\thanks{E-mail: \rm{edgar.martinez@uva.es}. Partially supported by Grant  SAGACT-1 MCIN/AEI/ 10.13039/501100011033 y FEDER "Una manera de hacer Europa" PID2022-138906NB-C21 2023/2027}}
\author[3]{Hassan  Ou-azzou\thanks{E-mail: \rm{hassan.ouazzou-ext@um6p.ma}}}

 \affil[1]{Institute of Mathematics and Informatics, Bulgarian Academy of Sciences, Bl. 8,
Acad. G. Bonchev Str., 1113, Sofia, Bulgaria}
  \affil[2]{ Institute of Mathematics, University of Valladolid, Castilla, Spain}
 \affil[3]{College of Computing, Mohammed VI Polytechnic University, Ben Guerir, Morocco}

\date{}
\begin{document}

	\maketitle
\begin{abstract}
This work studies skew polycyclic codes over finite chain rings defined by central trinomials. For this class of codes, we investigate Hamming equivalence in the non-commutative (skew) setting. We introduce an equivalence relation on the defining trinomials and demonstrate that it admits a group-theoretic characterization in terms of a group of binomials equipped with the Schur multiplication. We determine the conditions under which skew polycyclic codes are Hamming equivalent to those defined by the specific trinomial $x^n-(x^\ell+1)$. This reduces the classification problem for these codes, up to Hamming equivalence, to a canonical case. Finally, we determine the size of the corresponding equivalence class using the decomposition of the unit group of the underlying chain ring.

\end{abstract}

\keywords{Skew polycyclic codes, finite chain rings, trinomials, Hamming equivalence, skew polynomial rings}
    
\msc{94B05, 94B15,
16S36}

% ====================================================================
% Introduction
% ====================================================================  
\section{Introduction}	
A central problem in coding theory is constructing linear codes with optimal parameters. Given the sheer amount of linear codes, an exhaustive search for optimal ones is almost always practically infeasible. We can work around this problem by limiting our search to a subclass of linear codes with additional algebraic properties. One such subclass is polycyclic codes. While polycyclic codes have been defined since the very beginning of coding theory in the 70s, they have recently gained significant attention over both finite fields and finite rings (see, for example, \cite{EdgarMaryamSteve, EdgarMaryamSteve2, Sole2, fotue2020polycyclic, sole, OUAZZOU2025114279}). In particular, the study of codes associated with trinomials over finite fields was considered in \cite{aydin2022polycyclic} and further developed in \cite{Equiv2025, chibloun2025polycyclic}, showcasing many interesting properties and applications that warrant extension to the more general case of skew polynomials over chain rings, in other words, the class of \textit{skew polycyclic codes}. In this framework, a code can be seen as a left ideal of the skew polynomial ring modulo a left principal ideal generated by an associated polynomial.

One way to further reduce the task of finding new codes is to introduce an equivalence relation that preserves important properties and characteristics of a code, such as its weight distribution. In the context of linear codes with respect to the Hamming distance, monomial equivalence has been widely adopted as a suitable equivalence relation. This goes back to the early work of MacWilliams, where it was shown that over finite fields, such a relation essentially amounts to having a linear weight-preserving isomorphism. Unfortunately, checking whether two linear codes are equivalent is generally computationally challenging \cite{sendrier2013easy}. However, harnessing the additional algebraic structure present in skew polycyclic codes can help to overcome this computational limitation. An example of this was proposed in \cite{CHEN20121217}, where the authors introduced an equivalence relation called $n$-isometry to classify constacyclic codes of length $n$ over $\mathbb{F}_q$.  The fact of being in the same class with respect to the  $n$-isometry  implies the existence of an isometry with respect to the Hamming distance between the base rings where the codes are defined. This induces a biunivocal relation between the set of constacyclic codes on those ambient spaces and the ambient spaces themselves, preserving the weight distribution and the algebraic structure. In \cite{CHEN201460}, the authors provided an explicit mapping between generator polynomials. These relations have been generalized to constacyclic codes over finite chain rings \cite{chibloun2024isometry} and skew constacyclic codes over finite fields \cite{OUAZZOU2025114279}. These types of ideas have been generalized to polycyclic codes given by a trinomial in \cite{shi2023equivalence,chibloun2025polycyclic}. 

In addition, in the recent work \cite{pumpluen2025using}, the author studies monomial homomorphisms 
and isomorphisms $G_{\tau,\alpha,k}$,  defined by
$$G_{\tau,\alpha,k}\left(\sum_{i=0}^{n-1} c_i t^i\right) = \sum_{i=0}^{n-1} \tau(c_i)(\alpha t^k)^i,$$
where $\tau \in \mathrm{Aut}(S)$, $\alpha \in S^\times$, and $k \in \mathbb{N}$, and the map is defined over quotients 
of skew polynomial rings $S[t;\sigma]/S[t;\sigma]f(x)$ onto $S[t;\sigma]/S[t;\sigma]h(x)$, with $S$ a unital associative ring 
not necessarily commutative, and the polynomials $f, h$ are not necessarily central. The code families in  each ambient space $C_f$ and $C_h$ are 
called \textit{isometric} if such a map exists,  and they are called \textit{Chen isometric} if $\tau = \mathrm{id}$, 
\textit{equivalent} if $k=1$, and \textit{Chen equivalent} if $\tau = \mathrm{id}$ and $k=1$. 
When $S$ is a finite field, the size of each equivalence class is explicitly determined along 
with combinatorial examples. In Section~8 of \cite{pumpluen2025using}, for constacyclic codes 
over finite chain rings, it is shown that $|N_n^\sigma(1+xS)|$ determines the exact number of 
Chen equivalence classes, though the authors note that this group seems difficult to compute in 
general and only estimates on its size are provided.

In the present paper, we complement this work in a different direction. The map
$$\varphi_\alpha\left(\sum_{i=0}^{n-1} g_i x^i\right) = \sum_{i=0}^{n-1} g_i N_i^\sigma(\alpha) x^i$$
in Definition~\ref{Def_Iso} coincides exactly with the Chen equivalence map $G_{\mathrm{id},\alpha,1}$, so our notation 
$a(x) \sim_{(n,\sigma)} b(x)$ corresponds precisely to Chen equivalence. Our notion of Hamming 
$(n,\sigma)$-equivalence is, however, broader: two codes $C_1 \subseteq R^r_{f_1}$ and 
$C_2 \subseteq R^r_{f_2}$ are Hamming $(n,\sigma)$-equivalent if there exists any 
Hamming-distance-preserving ring isomorphism $\varphi: R^r_{f_1} \to R^r_{f_2}$ with 
$\varphi(C_1) = C_2$, without requiring that $\varphi$ take the special form $\varphi_\alpha$. 
While Chen equivalence implies Hamming $(n,\sigma)$-equivalence, the converse is not trivial. 
In Theorem~4.1, we prove the converse for the special case  of polynomials of the form $f_2(x) = x^n - (x^\ell + 1)$, 
and exploit this to connect the equivalence classes to the groups $H_{\ell,\sigma}$ and 
$B_{\ell,\sigma}$. Finally, while \cite{pumpluen2025using} leaves open the explicit computation 
of the number of Chen equivalence classes of constacyclic codes over chain rings, in Section~5 
we explicitly compute $|\ker(\theta)| = |\ker(\theta|_{T^*})| \cdot |\ker(\theta|_U)|$,  where $U=1+\gamma R$ and $T^*$ is Teichm\"uller set of the chain ring  $R$ excluding the zero element. Thus,  we can 
determine the exact number of Chen equivalence classes of trinomial polycyclic codes that are 
not constacyclic over finite chain rings.

Thus, in this paper, we focus our attention on skew polycyclic codes over finite chain rings associated with
trinomials, which is not just an incremental step in the research of those equivalences, but requires some new theoretical background as well as a refinement of the notion of equivalence with respect to the non-skew case.  

The outline of the paper is as follows. In Section~\ref{secPrelim}, we provide the basic facts on finite chain rings and skew polynomials over finite chain rings, and we fix the notation. Section~\ref{sec:SPCFCC} provides a description of skew polycyclic codes over finite chain rings, whereas Section~\ref{eq} deals with the Hamming equivalence of those objects. Finally, in Section~\ref{sec:number}, we give a way to count the number of Hamming equivalence classes.

% ====================================================================
% Preliminaries
% ====================================================================   
\section{Preliminaries}\label{secPrelim}
In this section, we recall basic properties of finite chain rings and introduce skew polynomial rings over such rings. 
% ====================================================================
% Finite chain rings
% ====================================================================
\subsection{Finite chain rings}\label{Finite chain rings}

A \emph{chain ring} is a ring whose ideals are totally ordered by inclusion. 
Throughout this paper, $R$ will denote a finite commutative chain ring with identity. % It is well known that $R$ is a finite local principal ideal ring. 
For general background on finite chain rings, we refer the reader to \cite{clark1973enumeration, mcdonald1974finite}. Let $\mathfrak{m}$ denote the unique maximal ideal of $R$, and let $\mathbb{F}_q = R/\mathfrak{m}$ be its residue field, where $q=p^r$ is a power of a prime $p$. We will fix a generator of $\mathfrak{m}$ and denote it by $\gamma$, and $e$ will denote its nilpotency index, so that $\gamma^e=0$ and $\gamma^{e-1} \neq 0$. Then, the ideals of $R$ form the chain
\begin{align}
    \langle 0 \rangle = \langle \gamma^e \rangle \subsetneq \langle \gamma^{e-1} \rangle \subsetneq \cdots \subsetneq \langle \gamma \rangle = \mathfrak{m} \subsetneq \langle 1 \rangle = R.
\end{align} The map
$\overline{\, \cdot \,}\colon R \to \mathbb{F}_q$ will denote the canonical projection of $R$ onto its residue field. It can be extended coefficient-wise to a ring homomorphism $R[x] \to \mathbb{F}_q[x]$, and for a polynomial $h \in R[x]$ we denote its image by $\overline{h}$. 
A polynomial $h \in R[x]$ is called \emph{basic irreducible} if $\overline{h}$ is irreducible in $\mathbb{F}_q[x]$.

An important class of chain rings is the class of \emph{Galois rings}, defined as $\mathrm{GR}(p^{m},r):=\mathbb{Z}_{p^m}[x]/\langle h\rangle$, where $h\in \mathbb{Z}_{p^m}[x]$ is a basic irreducible polynomial of degree $r$. Then $\mathrm{GR}(p^m,r)$ contains a unit $\omega$ of multiplicative order $p^r-1$ such that $\mathrm{GR}(p^m,r)=\mathbb{Z}_{p^m}[\omega]$.
 
The unit group $R^\times$ contains a unique cyclic subgroup of order $q-1$, given by $T^*:=\{\xi\in R^\times \colon \xi^{q-1}=1\}$. The set $T:=T^*\cup\{0\}$ is called the \emph{Teichm\"uller set} of $R$. In the special case $R=\mathrm{GR}(p^m,r)$, this set is given explicitly by $T=\{0,1,\omega,\omega^2,\dots,\omega^{p^r-2}\}$. The set $T^*$ forms a complete set of representatives of the residue field $\mathbb{F}_q$ under the canonical projection $\overline{\, \cdot \,}\colon R \to \mathbb{F}_q$. Every element $a\in R$ admits a unique expansion $a=\xi_0+\gamma \xi_1+\cdots+\gamma^{e-1}\xi_{e-1}$, where $\xi_i\in T$. This yields the unique decomposition of the   group of units of $\mathrm{GR}(p^{m},r)$ as
\begin{align}\label{UnitGroup}
R^\times = T^*\cdot (1+\gamma R)\cong T^*\times (1+\gamma R).
\end{align}
Since $|R^\times|=p^{r(e-1)}(p^r-1)$ and $|T^*|=p^r-1$, it follows that the subgroup $U:=1+\gamma R$ has order $p^{r(e-1)}$. Hence $U$ is a finite abelian $p$-group since $R$ is commutative. Moreover, if $x\in T^*\cap U$, then the order of $x$ divides both $|T^*|$ and $|U|$. Since $\gcd(p^r-1,p^{r(e-1)})=1$, we must have $x=1$. Therefore, $T^*\cap U=\{1\}$.
The following result will be useful in the following sections.

\begin{lemma}\label{stabilize}
Let $\sigma \in \operatorname{Aut}(R)$. Then  $\sigma(T^*) = T^*$ and  $\sigma(U) = U$.
\end{lemma}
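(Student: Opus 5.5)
The plan is to use two facts: a ring automorphism maps units to units and preserves multiplicative orders, and it preserves the unique maximal ideal. First, $\sigma$ is a ring automorphism with $\sigma(1)=1$, so it restricts to a group automorphism of $R^\times$. For $\xi\in T^*$ we have $\sigma(\xi)^{q-1}=\sigma(\xi^{q-1})=\sigma(1)=1$, and $\sigma(\xi)$ is a unit, so $\sigma(\xi)\in T^*$. Thus $\sigma(T^*)\subseteq T^*$. Since $\sigma$ is injective and $T^*$ is finite, this inclusion is an equality. Equivalently, $T^*$ is characterized as the unique subgroup of $R^\times$ of order $q-1$, and a group automorphism of $R^\times$ must send it to a subgroup of the same order, hence to itself.

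For $U$, the key step is to show that $\sigma(\mathfrak{m})=\mathfrak{m}$. The image $\sigma(\mathfrak{m})$ is a maximal ideal, because $\sigma$ is a ring isomorphism and so carries the lattice of ideals onto itself. Since $R$ is local, $\sigma(\mathfrak{m})$ must equal $\mathfrak{m}$. Alternatively, $\mathfrak{m}$ is exactly the set of non-units, and $\sigma$ preserves units and non-units. Then for $u=1+\gamma r\in U$ we get $\sigma(u)=1+\sigma(\gamma r)$, and $\sigma(\gamma r)\in\sigma(\mathfrak{m})=\mathfrak{m}=\gamma R$. Hence $\sigma(u)\in U$, and finiteness plus injectivity again give $\sigma(U)=U$.

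A purely group-theoretic alternative for $U$ is also available. $U$ is the Sylow $p$-subgroup of the abelian group $R^\times$, since $|U|=p^{r(e-1)}$ is coprime to $|T^*|$. It is therefore characteristic in $R^\times$, so $\sigma(U)=U$.

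There is no real obstacle here. The only point needing care is to justify $\sigma(\mathfrak{m})=\mathfrak{m}$, which I will do via the characterization of $\mathfrak{m}$ as the set of non-units. I also note that the decomposition (\ref{UnitGroup}) holds for any finite chain ring, not only Galois rings, so the argument applies to general $R$.
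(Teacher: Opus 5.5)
Your proof is correct and follows essentially the same route as the paper: $\sigma(\xi)^{q-1}=\sigma(\xi^{q-1})=1$ gives $\sigma(T^*)\subseteq T^*$, and uniqueness of the maximal ideal gives $\sigma(\mathfrak{m})=\mathfrak{m}$, hence $\sigma(1+\gamma R)\subseteq 1+\gamma R$; bijectivity then turns both inclusions into equalities. The paper also shows $\sigma(\gamma)=\gamma u$ with $u\in R^\times$, which your direct use of $\sigma(\mathfrak{m})=\mathfrak{m}=\gamma R$ makes unnecessary, and your Sylow-subgroup alternative for $U$ is likewise valid.
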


\begin{proof} Note that
$\sigma$ maps maximal ideals to maximal ideals. %In particular, $\sigma(\mathfrak{m})$ is a maximal ideal of $R$. 
Since $\mathfrak{m}$ is the unique maximal ideal of $R$, it follows that $\sigma(\mathfrak{m}) = \mathfrak{m}$. As $\mathfrak{m} = \langle \gamma \rangle$, the equality $\sigma(\mathfrak{m}) = \mathfrak{m}$ implies that $\langle \sigma(\gamma) \rangle = \langle \gamma \rangle$. In a chain ring, an element $x \in \mathfrak{m}$ generates $\mathfrak{m}$ if and only if $x \notin \mathfrak{m}^2$. Hence, $\sigma(\gamma) \notin \mathfrak{m}^2$, and therefore $\sigma(\gamma) = \gamma u$ for some unit $u \in R^\times$. It follows that $\sigma(\gamma R) = \gamma R$, and consequently $\sigma(1 + \gamma R) = 1 + \gamma R$. Moreover, for any $\xi \in T^*$, the relation $\xi^{q-1} = 1$ implies that $\sigma(\xi)^{q-1} = 1$, and hence $\sigma(\xi) \in T^*$. Therefore, $ \sigma(T^*) \subseteq T^*$ and $\sigma(U) \subseteq U$. Now, since $\sigma$ is a bijection, those inclusions are equal.
\end{proof}
% ====================================================================
% Skew polynomials over finite chain rings
% ====================================================================
\subsection{Skew polynomials over finite chain rings}
In this section, we briefly recall some notation and results for skew polynomial rings; see \cite{Bueso2013, Lam1992homomorphisms, Lam1988algebraic, mcdonald1974finite} for further details. Let $R$ be a finite commutative chain ring with maximal ideal $\mathfrak{m}$ and residue field $\mathbb{F}_q = R/\mathfrak{m}$, and let $\sigma \in \operatorname{Aut}(R)$. The $\sigma$-skew polynomial ring $R[x;\sigma]$ is given by the set of polynomials $h(x)=\sum_{i=0}^n a_i x^i$, where $a_i \in R, \ n \ge 0$, equipped with the usual addition and multiplication determined by
\begin{align}
    xa=\sigma(a)x,
\end{align}
and extended to all polynomials by distributivity. For convenience, we often write $h$ instead of $h(x)$ to denote a polynomial in $R[x;\sigma]$. Note that the ring $R[x;\sigma]$ is non-commutative unless $\sigma = \id_R$.
Since $\sigma(\mathfrak m)=\mathfrak m$, $\sigma$ induces a well-defined automorphism $\overline{\sigma}$ of $\mathbb{F}_q$, defined by
\begin{align}
    \overline{\sigma}(\overline{a}) = \overline{\sigma(a)}, \quad \text{for all}\, a \in R.
\end{align}
Thus, the canonical projection $\overline{\, \cdot \,}\colon R\rightarrow  \mathbb F_q$ extends to a surjective ring homomorphism
\begin{align}
    \overline{\, \cdot \,}\colon  R[x;\sigma] &\longrightarrow  \mathbb F_q[x;\overline{\sigma}] \nonumber\\
  h(x)=\sum_{i=0}^n a_i x^i &\longmapsto \overline{h}(x)=\sum_{i=0}^n \overline{a_i} x^i.
\end{align}
%A polynomial  $h \in R[x;\sigma]$ is said to be \emph{basic irreducible} if its image $\overline{h}$ is irreducible in $\mathbb F_q[x ; \overline{\sigma}]$, and regular if $\overline{h} \neq 0$. 
A polynomial $h \in R[x;\sigma]$ is said to be \emph{central} if it belongs to the center $Z(R[x;\sigma])$ of the ring. The \emph{fixed subring} of $R$ under $\sigma$ is $R^{\sigma}:=\{\,a\in R \mid \sigma(a)=a\,\}$. Let $\sigma \in \operatorname{Aut}(R)$ be of finite order $\mu \neq 1$, and let $n=\mu m$. We define the following set
\begin{align}
\mathcal{A}_{\sigma} := \Big\{ (d_0, \dots, d_{n-1}) \in \big({R^\sigma}\big)^n \,:\, d_i = 0\,\, \text{for}\,\, i \not\equiv 0 \pmod{\mu} \Big\}.
\end{align}
Thus, the elements of $\mathcal{A}_{\sigma}$ are vectors over $R^{\sigma}$ whose support is contained in the set $\{i\colon i\equiv 0 \pmod{\mu}\}$.

\begin{lemma}\label{m20}
A monic polynomial $h(x) = x^n - \sum_{i=0}^{n-1} a_i x^i \in R[x; \sigma]$ is  central (i.e. an element in the center) if and only if
\begin{enumerate}
    \item $\sigma(a_i) = a_i$ for all $0 \le i \le n-1$;
    \item $a_i\big(r - \sigma^i(r)\big) = 0$ for all $r \in R$ and $0 \le i \le n-1$;
    \item $\sigma^n = \id_R$.
\end{enumerate}
\end{lemma}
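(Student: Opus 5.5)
The plan is to test centrality of $h$ against the two kinds of generators of $R[x;\sigma]$: constants $r\in R$ and the variable $x$. Since $R[x;\sigma]$ is generated as a ring by $R$ and $x$, $h$ is central if and only if $hr=rh$ for all $r\in R$ and $hx=xh$. So I will expand both commutators in the basis $\{x^i\}$ and compare coefficients, using that $R$ is commutative and that every element of $R[x;\sigma]$ has unique left coefficients.

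\emph{Commutation with $x$.} Using $x a = \sigma(a)x$, compute
\[
xh = x^{n+1}-\sum_{i=0}^{n-1}\sigma(a_i)x^{i+1}, \qquad hx = x^{n+1}-\sum_{i=0}^{n-1}a_i x^{i+1}.
\]
Comparing coefficients, $xh=hx$ holds if and only if $\sigma(a_i)=a_i$ for all $i$, which is condition 1.

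\emph{Commutation with $r\in R$.} Using $x^i r=\sigma^i(r)x^i$, compute
\[
hr=\sigma^n(r)x^n-\sum_{i=0}^{n-1}a_i\sigma^i(r)x^i, \qquad rh = r x^n-\sum_{i=0}^{n-1} r a_i x^i .
\]
Since $R$ is commutative, $ra_i=a_ir$. Comparing the leading coefficients gives $\sigma^n(r)=r$ for all $r$, that is, $\sigma^n=\id_R$, which is condition 3. Comparing the lower coefficients gives $a_i(r-\sigma^i(r))=0$, which is condition 2.

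\emph{Both directions.} For necessity, apply the comparisons above to the central element $h$. For sufficiency, the three conditions give $hx=xh$ and $hr=rh$ for every $r$. Then $h$ commutes with every monomial $rx^j$, by induction on $j$, and by additivity with every skew polynomial.

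No step is a real obstacle. The only point needing care is uniqueness of coefficients: writing polynomials with coefficients on the left, $\sum c_i x^i = 0$ forces every $c_i=0$, because $R[x;\sigma]$ is a free left $R$-module on the powers $x^i$. This justifies the coefficient comparison. Note also that the leading coefficient of $h$ is $1$, so matching the degree-$n$ terms yields exactly $\sigma^n(r)=r$.
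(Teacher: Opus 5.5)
Your proposal is correct and follows the paper's proof: both test $h$ against the generators $x$ and $r\in R$ of $R[x;\sigma]$ and compare left coefficients of $xh$ versus $hx$ and $rh$ versus $hr$. You are somewhat more explicit than the paper about the sufficiency direction (extending commutation to all monomials $rx^j$) and about uniqueness of left coefficients, both of which the paper leaves implicit.
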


\begin{proof}
An element in $R[x; \sigma]$ is central if and only if it commutes with the generators of the ring, namely $x$ and each $r \in R$. First, the condition $xh = hx$ provides
\begin{align}
    x^{n+1} - \sum_{i=0}^{n-1} \sigma(a_i)x^{i+1}=x^{n+1} - \sum_{i=0}^{n-1} a_i x^{i+1},
\end{align}
and hence $\sigma(a_i) = a_i$ for all $0 \le i \le n-1$. Next, for $r \in R$, the equality $rh = hr$ yields
\begin{align}
    r x^n - \sum_{i=0}^{n-1} r a_i x^i=\sigma^n(r) x^n - \sum_{i=0}^{n-1} a_i\sigma^i(r) x^i.
\end{align}
Comparing coefficients, we obtain $r = \sigma^n(r)$ for all $r \in R$, and thus $\sigma^n = \id_R$. Moreover, for each $i$, $r a_i = a_i \sigma^i(r)$, which is equivalent to $a_i\big(r - \sigma^i(r)\big) = 0$.
\end{proof}

\begin{lemma}\label{Cor_Asigma}
Let $\sigma \in \operatorname{Aut}(R)$ be an automorphism of finite order $\mu$, and let $n=\mu m$. Let $h(x)=x^n - \sum_{i=0}^{n-1} a_i x^i \in R[x;\sigma]$ be a monic polynomial.
\begin{enumerate}
    \item If $(a_0, \dots, a_{n-1}) \in \mathcal{A}_\sigma$, then $h$ is central.
    \item Suppose additionally that the induced automorphism $\overline{\sigma}$ on the residue field $\mathbb{F}_q$ has order $\mu$. Then $h$  is central if and only if $(a_0, \dots, a_{n-1}) \in \mathcal{A}_\sigma$.  
\end{enumerate}
\end{lemma}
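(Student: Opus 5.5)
Both parts reduce to checking the three conditions of Lemma~\ref{m20}.

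For part (1), assume $(a_0,\dots,a_{n-1})\in\mathcal{A}_\sigma$. Condition (1) of Lemma~\ref{m20} holds because every $a_i$ lies in $R^\sigma$ (the zero entries are fixed trivially). Condition (3) holds because $n=\mu m$ and $\sigma$ has order $\mu$, so $\sigma^n=(\sigma^\mu)^m=\id_R$. For condition (2), fix $i$. If $i\not\equiv 0 \pmod{\mu}$, then $a_i=0$ and $a_i(r-\sigma^i(r))=0$ trivially. If $i\equiv 0\pmod{\mu}$, then $\sigma^i=\id_R$, so $r-\sigma^i(r)=0$. Hence $h$ is central.

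For part (2), one direction is part (1). For the converse, suppose $h$ is central and $\overline{\sigma}$ has order $\mu$. Lemma~\ref{m20} gives $\sigma(a_i)=a_i$ for all $i$, so every $a_i\in R^\sigma$. It remains to show that $a_i=0$ whenever $\mu\nmid i$. Fix such an $i$. Since $\overline{\sigma}$ has order exactly $\mu$, we have $\overline{\sigma}^{\,i}\neq \id_{\mathbb{F}_q}$. So there is $\bar r\in\mathbb{F}_q$ with $\overline{\sigma}^{\,i}(\bar r)\neq \bar r$. Lift $\bar r$ to some $r\in R$; the Teichmüller lift is convenient but any lift works. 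Then $\overline{r-\sigma^i(r)}=\bar r-\overline{\sigma}^{\,i}(\bar r)\neq 0$, so $u:=r-\sigma^i(r)\notin\mathfrak m$ and therefore $u\in R^\times$, since $R$ is local. Condition (2) of Lemma~\ref{m20} gives $a_i u=0$, and multiplying by $u^{-1}$ yields $a_i=0$. Thus $(a_0,\dots,a_{n-1})\in\mathcal{A}_\sigma$.

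The only step that needs care is producing a unit of the form $r-\sigma^i(r)$. This rests on two facts: that $\overline{\sigma}^{\,i}=\overline{\sigma^i}$, which holds because $\sigma\mapsto\overline{\sigma}$ is compatible with composition by its definition, and that the elements of $R$ outside $\mathfrak m$ are exactly the units. Note that the hypothesis on the order of $\overline{\sigma}$ is used precisely here. Without it, $\sigma^i$ could act trivially modulo $\mathfrak m$, in which case $r-\sigma^i(r)\in\mathfrak m$ for every $r$, and nonzero $a_i$ in a suitable annihilator could still satisfy condition (2).
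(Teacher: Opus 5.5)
Your proof is correct and matches the paper's argument step for step. For part (1) you check the three conditions of Lemma~\ref{m20} directly. For part (2) you use $\overline{\sigma}^{\,i}\neq\id$ for $\mu\nmid i$ to lift an $r$ with $r-\sigma^i(r)\in R^\times$, which forces $a_i=0$.
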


\begin{proof}
Let $(a_0, \dots, a_{n-1}) \in \mathcal{A}_\sigma$. Then each $a_i\in R^\sigma$, and $a_i=0$ for $i\not\equiv 0\pmod{\mu}$. Moreover, if $i\equiv 0\pmod{\mu}$, then
$\sigma^i=\operatorname{Id}_R$. Hence, for all $r \in R$ and all $i$, we have $a_i(r - \sigma^i(r)) = 0$ for all $r \in R$. Also, since $n=\mu m$, we have $\sigma^n=\operatorname{Id}_R$. Thus, $h$ satisfies all conditions of Lemma~\ref{m20}, and hence it is central. To prove the second item, assume that $h$ is central. By Lemma~\ref{m20}, we have $\sigma(a_i) = a_i$ for all $i$, 
\begin{align}\label{del}
    a_i\bigl(r - \sigma^i(r)\bigr) = 0 \quad \text{for all } r \in R.
\end{align}
Let $i\not\equiv 0 \pmod{\mu}$. Since $\overline{\sigma}$ is of order $\mu$,
we have $\overline{\sigma}^i\neq\operatorname{Id}$. Hence, there exists $\overline{r}\in \mathbb F_q$ such that $\overline{r}-\overline{\sigma}^i(\bar r)\neq 0$. Choosing a lift $r\in R$, we obtain that $r-\sigma^i(r)$ is not in $\mathfrak m$, hence is a unit of $R$. Thus, if $i\not\equiv 0\pmod{\mu}$, the equation \eqref{del} implies that $a_i=0$. Since each $a_i$ is fixed by $\sigma$, we conclude that $(a_0,\dots,a_{n-1})\in\mathcal A_\sigma$.
\end{proof}
The following example shows that if $\overline{\sigma}$ does not have order $\mu$, the second item in the above lemma may fail.
\begin{example}\label{ex:converse-fail}
Consider the chain ring $R = \mathbb{F}_3[u] / \langle u^3 \rangle$, and define $\sigma \in \operatorname{Aut}(R)$ by $\sigma(u) = -u$, fixing $\mathbb{F}_3$. Then $\sigma$ has order $\mu = 2$. The monic polynomial $h(x)=x^{2}-u^{2}x \in R[x;\sigma]$ is central because the conditions of Lemma \ref{m20} hold. On the other hand, since $\mu = 2$, the set $\mathcal{A}_\sigma$ consists of all the pairs $(d_0, d_1) \in (R^\sigma)^2$ with $d_1 = 0$. But  the coefficient vector of $h$ is $(0, -u^2)$, which is not in $\mathcal{A}_\sigma$.
\end{example}
The ring $R[x;\sigma]$ admits right (resp. left) division algorithms. More precisely, let $h,k\in R[x;\sigma]$ with $k$ having as leading coefficient a unit in the ring $R$\footnote{Over a general ring, the division algorithm requires the leading coefficient of the divisor to be a unit, since cancellation of leading terms need not be possible otherwise.}. Then, there exist unique polynomials $q,r\in R[x;\sigma]$ such that $h = qk + r$ with $\deg(r)<\deg(k)$. The polynomial $r$ is called the right (resp. left) remainder. For $\beta\in R$, the $i$th $\sigma$-norm of $\beta$ is defined recursively by
\begin{align}
    N_0^\sigma(\beta)=1, \quad N_i^\sigma(\beta)=\sigma\bigl(N_{i-1}^\sigma(\beta)\bigr)\beta \quad (i\ge 1).
\end{align}
As in \cite[Lemma 2.4]{lam1988vandermonde},  $x-\beta$ is a right divisor of $h(x)=\sum_{i=0}^n a_i x^i$ in $R[x;\sigma]$ if and only if $\sum_{i=0}^n a_i N_i^\sigma(\beta)=0$. In this case, $\beta$ is called a right root of $h$.
The following results present some properties of the $\sigma$-norm that will be required in the sequel. 
\begin{lemma}\label{norm-pro}
Let $\alpha,\,\beta\in R$ and let $i,j\geqslant 0$. Then, the following holds:
\begin{enumerate}
    \item $N_i^\sigma(\beta) =\prod_{t=0}^{i-1}\sigma^t(\beta)$; 
    \item If $\beta\in R^\times$, then $N_i^\sigma(\beta)\in R^\times$, and $N_i^\sigma(\beta^{-1})=\bigl(N_i^\sigma(\beta)\bigr)^{-1}$;
    \item $N_i^\sigma(\alpha\beta)=N_i^\sigma(\alpha)N_i^\sigma(\beta)$;
    \item $N_{i+j}^\sigma(\beta)=\sigma^j\bigl(N_i^\sigma(\beta)\bigr)\,N_j^\sigma(\beta)=N_i^\sigma\bigl(\sigma^j(\beta)\bigr)\,N_j^\sigma(\beta)$; 
    \item $\sigma^i(\beta)N_i^\sigma(\beta)=\sigma\bigl(N_i^\sigma(\beta)\bigr)\beta=N_{i+1}^\sigma(\beta)$;
    \item $(\beta x)^i=N_i^\sigma(\beta)x^i$. %Proof by induction and the previous item
    \item $N_m^\sigma(\beta)\in R^\sigma$ if $\sigma^m=\id_R$. % because \begin{align}\sigma\bigl(N_m^\sigma(\beta)\bigr)=\prod_{t=0}^{m-1}\sigma^{t+1}(\beta)=\sigma^m(\beta)\prod_{t=1}^{m-1}\sigma^t(\beta).\end{align}
\end{enumerate}
\end{lemma}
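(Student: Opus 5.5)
All seven items follow from the recursive definition $N_0^\sigma(\beta)=1$, $N_i^\sigma(\beta)=\sigma(N_{i-1}^\sigma(\beta))\beta$, using induction on $i$ and the fact that $R$ is commutative, so only the order in which $\sigma$ is applied matters. I will prove item 1 first and derive most of the other items from it.

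For item 1, I use induction on $i$. The case $i=0$ is the empty product, which equals $1$. If $N_{i-1}^\sigma(\beta)=\prod_{t=0}^{i-2}\sigma^t(\beta)$, then applying the ring homomorphism $\sigma$ and multiplying by $\beta$ gives
$$\prod_{t=1}^{i-1}\sigma^t(\beta)\cdot\beta=\prod_{t=0}^{i-1}\sigma^t(\beta).$$

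Item 1 then gives items 2, 3, 5 and 7 directly.
\begin{itemize}
\item Item 3: since $\sigma^t$ is multiplicative and $R$ is commutative, $\prod_t\sigma^t(\alpha\beta)=\prod_t\sigma^t(\alpha)\prod_t\sigma^t(\beta)$.
\item Item 2: units map to units under $\sigma^t$, so $N_i^\sigma(\beta)$ is a product of units. Taking $\alpha=\beta^{-1}$ in item 3 and using $N_i^\sigma(1)=1$ gives the inverse formula.
\item Item 5: both $\sigma^i(\beta)N_i^\sigma(\beta)$ and $\sigma(N_i^\sigma(\beta))\beta$ equal $\prod_{t=0}^{i}\sigma^t(\beta)$. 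The second expression is $N_{i+1}^\sigma(\beta)$ by definition.
\item Item 7: by item 1, $\sigma(N_m^\sigma(\beta))=\prod_{t=1}^{m}\sigma^t(\beta)$. Since $\sigma^m(\beta)=\beta$, this is a reindexing of the same product, so $N_m^\sigma(\beta)\in R^\sigma$.
\end{itemize}

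For item 4, I split $\prod_{t=0}^{i+j-1}\sigma^t(\beta)$ at $t=j$. The part $t<j$ is $N_j^\sigma(\beta)$. The part $j\le t\le i+j-1$ is $\prod_{s=0}^{i-1}\sigma^{s+j}(\beta)$, which equals both $\sigma^j(N_i^\sigma(\beta))$ and $\prod_{s=0}^{i-1}\sigma^s(\sigma^j(\beta))=N_i^\sigma(\sigma^j(\beta))$.

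Item 6 is the only one that uses the skew multiplication. I argue by induction on $i$; the case $i=0$ is trivial. For the inductive step I write $(\beta x)^{i+1}=(\beta x)^i(\beta x)=N_i^\sigma(\beta)x^i\beta x$. Iterating the rule $xa=\sigma(a)x$ gives $x^i\beta=\sigma^i(\beta)x^i$, so the expression becomes $N_i^\sigma(\beta)\sigma^i(\beta)x^{i+1}$. By item 5 and commutativity of $R$, this is $N_{i+1}^\sigma(\beta)x^{i+1}$.

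The main point needing care is the direction of the products in the skew ring in item 6. Writing the product as $(\beta x)^i(\beta x)$ rather than $(\beta x)(\beta x)^i$ makes the coefficient match the form $\sigma^i(\beta)N_i^\sigma(\beta)$ from item 5. The other order also works, since it yields $\beta\sigma(N_i^\sigma(\beta))$, which is the recursive definition.
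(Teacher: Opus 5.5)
Your proof is correct and takes essentially the same route the paper indicates. The paper gives no written proof of this lemma; its source comments point to exactly your arguments, namely induction on $i$ via item~5 for item~6, and the reindexing $\sigma\bigl(N_m^\sigma(\beta)\bigr)=\sigma^m(\beta)\prod_{t=1}^{m-1}\sigma^t(\beta)$ for item~7, with the remaining items following from the product formula of item~1 and commutativity of $R$ as you describe.
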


\begin{lemma}\label{hom_le}
Let $\alpha\in R$. Then, the mapping $R[x;\sigma]\to R[x;\sigma]$ defined by $x \mapsto \alpha x$ is a ring homomorphism.
\end{lemma}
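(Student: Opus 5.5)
The map in question sends $\sum_i a_i x^i$ to $\sum_i a_i(\alpha x)^i$. To show it is a ring homomorphism, we use the universal property of the skew polynomial ring $R[x;\sigma]$. A ring homomorphism out of $R[x;\sigma]$ is determined by two pieces of data: a ring homomorphism $\psi\colon R\to S$ (here the inclusion $R\hookrightarrow R[x;\sigma]$) and an element $y\in S$ satisfying the commutation rule $y\,\psi(a)=\psi(\sigma(a))\,y$ for all $a\in R$. So the plan is to take $y=\alpha x$ and verify this rule, and, to stay self-contained, also check multiplicativity directly on monomials.

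First we verify the commutation relation. For $a\in R$,
\begin{align*}
(\alpha x)a=\alpha\sigma(a)x=\sigma(a)\alpha x,
\end{align*}
using that $R$ is commutative. Hence $\alpha x$ satisfies the same defining relation $ya=\sigma(a)y$ as $x$.

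Next we define $\Phi\bigl(\sum_i a_ix^i\bigr)=\sum_i a_i(\alpha x)^i$. By item 6 of Lemma~\ref{norm-pro} this equals $\sum_i a_iN_i^\sigma(\alpha)x^i$. The map $\Phi$ is clearly additive and satisfies $\Phi(1)=1$, so by distributivity it remains to check multiplicativity on monomials:
\begin{align*}
\Phi(ax^i\cdot bx^j)=\Phi\bigl(a\sigma^i(b)x^{i+j}\bigr)=a\sigma^i(b)(\alpha x)^{i+j}.
\end{align*}
On the other side,
\begin{align*}
\Phi(ax^i)\Phi(bx^j)=a(\alpha x)^i b(\alpha x)^j.
\end{align*}
Iterating the commutation relation gives $(\alpha x)^i b=\sigma^i(b)(\alpha x)^i$, so the right-hand side becomes $a\sigma^i(b)(\alpha x)^{i+j}$, and the two expressions agree.

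Alternatively, one can compare coefficients of $x^{i+j}$ explicitly. This reduces to the identity
\begin{align*}
N_i^\sigma(\alpha)\,\sigma^i\bigl(N_j^\sigma(\alpha)\bigr)=N_{i+j}^\sigma(\alpha),
\end{align*}
which is item 4 of Lemma~\ref{norm-pro} with the roles of $i$ and $j$ swapped.

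The only step that needs any care is the iterated commutation $(\alpha x)^i b=\sigma^i(b)(\alpha x)^i$. It follows by a straightforward induction on $i$ from the single relation verified in the first step, so no real obstacle is expected.
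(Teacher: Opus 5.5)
Your proposal is correct and follows essentially the same route as the paper. The paper also verifies multiplicativity on monomials and extends by distributivity, using exactly the identity $N_{j+k}^\sigma(\alpha)=N_j^\sigma(\alpha)\,\sigma^j\bigl(N_k^\sigma(\alpha)\bigr)$ that you give as your alternative check; your main argument via the iterated relation $(\alpha x)^i b=\sigma^i(b)(\alpha x)^i$ is a coordinate-free restatement of that same norm identity.
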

\begin{proof} 
For the monomials $g(x)=ax^j$ and $h(x)=bx^k$ in $R[x;\sigma]$, we have
\begin{align*}
gh(\alpha x) & = a \sigma^j(b) N_{j+k}^\sigma(\alpha) x^{j+k}  = a \sigma^j(b) \left( N_j^\sigma(\alpha) \sigma^j(N_k^\sigma(\alpha)) \right) x^{j+k} \\
&= a N_j^\sigma(\alpha) \sigma^j(b) \sigma^j(N_k^\sigma(\alpha)) x^{j+k} = a N_j^\sigma(\alpha) \sigma^j(b N_k^\sigma(\alpha)) x^{j+k} \\
&= (a N_j^\sigma(\alpha) x^j) (b N_k^\sigma(\alpha) x^k)  = g(\alpha x) h(\alpha x),
\end{align*} and the result follows.
\end{proof}
% ====================================================================
% Skew polycyclic codes over finite chain rings
% ====================================================================
\section{Skew polycyclic codes over finite chain rings}\label{sec:SPCFCC}

Let $R$ be a finite commutative chain ring. A \emph{linear code} $C$ of length $n$ over $R$ is an $R$-submodule of $R^n$. Let $f$ denote a monic polynomial of degree $n$ of the form
\begin{align}
    f(x)= x^n-a(x) := x^n - \sum_{i=0}^{n-1} a_i x^i \in R[x; \sigma].
\end{align}
Let $\vec{a} = (a_0, a_1, \dots, a_{n-1}) \in R^n$ be the coefficient vector of $f$. A linear code $C \subseteq R^n$ is called \emph{skew $(\sigma, f)$-polycyclic} (or equivalently, \emph{skew $(\sigma, \vec{a})$-polycyclic}) if it is invariant under the skew polycyclic shift; that is, for every $\vec{c}=(c_0, c_1, \dots, c_{n-1}) \in C$, the vector
\begin{align}
\big(0, \sigma(c_0), \sigma(c_1), \dots, \sigma(c_{n-2})\big) + \sigma(c_{n-1})\vec{a}
\end{align}
also belongs to $C$. If $\sigma = \id$, this reduces to classical $f$-polycyclic codes over rings, see for example \cite{fotue2020polycyclic, EdgarMaryamSteve,EdgarMaryamSteve2}.

An additive map $T: R^n \to R^n$ is called $\sigma$-\emph{semilinear} if $T(\alpha v) = \sigma(\alpha) T(v)$ for all $\alpha \in R$ and $v \in R^n$. By \cite[Lemma 2]{bajalan2023sigma}, $C \subseteq R^n$ is a skew $(\sigma,f)$-polycyclic code if and only if it is invariant under the transformation
\begin{align}
T_{\sigma,\vec{a}}(c_0,\dots,c_{n-1}) = \big(0, \sigma(c_0), \dots, \sigma(c_{n-2})\big) + \sigma(c_{n-1})\vec{a}.
\end{align}
Writing $\sigma(\vec{c}) = (\sigma(c_0),\dots,\sigma(c_{n-1}))$, we have $T_{\sigma,\vec{a}}(\vec{c}) = \sigma(\vec{c}) C_f$,
where $C_f$ is the companion matrix of the polynomial $f(x)$, namely
\begin{align}
C_f =
\begin{pmatrix}
0 & 1 & 0 & \cdots & 0 \\
0 & 0 & 1 & \cdots & 0 \\
\vdots & \vdots & \ddots & \ddots & \vdots \\
0 & 0 & 0 & \ddots & 1 \\
a_0 & a_1 & a_2 & \cdots & a_{n-1}
\end{pmatrix}.
\end{align}
Thus, the code $C\subset R^n$ is skew $(\sigma,f)$-polycyclic if and only if $T_{\sigma,\vec{a}}(\vec{c}) \in C$ for all $\vec{c} \in C$.

The ring $R[x;\sigma]$ admits both right and left division algorithms. Let $\mathcal{R}_f^r$ (resp.\ $\mathcal{R}_f^l$) denote the set of right (resp.\ left) remainders upon division by $f(x)$. Then every $g\in R[x;\sigma]$ can be written uniquely as $g = qf + r$, where $\deg r < \deg f$, and hence $\mathcal{R}_f^r$ forms a complete system of representatives for the quotient ring
\begin{align}
  \frac{R[x;\sigma]}{\langle f\rangle_l},\quad\text{where}\, \langle f\rangle_l = R[x;\sigma]f.
\end{align}
Thus $R[x;\sigma]/\langle f\rangle_l \cong \mathcal{R}_f^r$ as left $R[x;\sigma]$-modules. When needed, we will write $\mathcal{R}_{x^n-a(x)}^r$ to emphasize the defining polynomial $f(x)=x^n-a(x)$. We equip $\mathcal{R}_f^r$ with the usual multiplication 
\begin{align}
    g\cdot h:=\operatorname{rem}_r(gh,f),\quad g,h\in \mathcal{R}_f^r,
\end{align}
where $\operatorname{rem}_r(gh,f)$ denotes the right remainder of $gh$ upon division by $f$. For $g,h,k \in \mathcal{R}_f^r$ with $g\cdot h = q_1 \cdot f + r_1$ and $h\cdot k = q_2\cdot f + r_2$, we have that
\begin{align}\label{associator}
       (g\cdot h)\cdot k - g\cdot (h\cdot k) = \operatorname{rem}_r(g \cdot q_2\cdot f - q_1\cdot f\cdot k, f).
\end{align}
Note that, in general, $f\cdot k \ne k\cdot f$, thus $\mathcal{R}_f^r$ is, in general, non-associative. If the polynomial $f$ is central, then $f\cdot k = k\cdot f$ for all $k \in R[x;\sigma]$, and the associator in \eqref{associator} vanishes. Hence $\mathcal{R}_f^r$ is associative and coincides with the quotient ring $R[x;\sigma]/\langle f\rangle_l$. Define the left $R$-module isomorphism $\Omega_f^r: R^n \to \mathcal{R}_f^r$ by
\begin{align}
    \Omega_f^r(g_0,\dots,g_{n-1}) = \sum_{i=0}^{n-1} g_i x^i.
\end{align}
We equip $\mathcal{R}_f^r$ with the \emph{Hamming weight}
\begin{align}
    \operatorname{wt}_H(g(x)) = \#\left\{i \colon g_i \neq 0\right\}, \quad g(x) = \sum_{i=0}^{n-1} g_i x^i,
\end{align}
and define the corresponding Hamming distance in the usual way.
\begin{lemma}\cite[Lemma 4]{bajalan2023sigma}\label{lm23}
A linear code $C \subseteq R^n$ is skew $(\sigma,\vec{a})$-polycyclic if and only if $\Omega_f^r(C)$ is a left $R[x;\sigma]$-submodule of $\mathcal{R}_f^r$.
\end{lemma}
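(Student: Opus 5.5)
The plan is to show that, under $\Omega_f^r$, the skew polycyclic shift $T_{\sigma,\vec a}$ on $R^n$ becomes left multiplication by $x$ on $\mathcal{R}_f^r$. Once that is in place, both implications follow from the fact that a left $R[x;\sigma]$-submodule of $\mathcal{R}_f^r$ is exactly an additive subgroup stable under left multiplication by scalars $r\in R$ and by $x$.

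\textbf{Step 1: the shift is multiplication by $x$.} Let $\vec c=(c_0,\dots,c_{n-1})$ and $c(x)=\Omega_f^r(\vec c)=\sum_{i=0}^{n-1}c_ix^i$. Using $xa=\sigma(a)x$, we get
\begin{align*}
x\,c(x)=\sum_{i=0}^{n-1}\sigma(c_i)x^{i+1}=\sum_{i=0}^{n-2}\sigma(c_i)x^{i+1}+\sigma(c_{n-1})x^n .
\end{align*}
Since $f$ is monic, $x^n = f + a(x)$ with $f=1\cdot f\in R[x;\sigma]f$. Hence $\sigma(c_{n-1})x^n=\sigma(c_{n-1})f+\sigma(c_{n-1})a(x)$. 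By uniqueness of the right division, the right remainder of $x\,c(x)$ modulo $f$ is therefore
\begin{align*}
\sum_{i=0}^{n-2}\sigma(c_i)x^{i+1}+\sigma(c_{n-1})\sum_{i=0}^{n-1}a_ix^i=\Omega_f^r\bigl(T_{\sigma,\vec a}(\vec c)\bigr).
\end{align*}

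\textbf{Step 2: scalars act coordinatewise.} For $r\in R$, left multiplication gives $r\,c(x)=\sum (rc_i)x^i$. This already has degree $<n$, so $r\cdot c(x)=\Omega_f^r(r\vec c)$. Hence $\Omega_f^r(C)$ is an $R$-submodule of $\mathcal{R}_f^r$ if and only if $C$ is an $R$-submodule of $R^n$, since $\Omega_f^r$ is an $R$-module isomorphism.

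\textbf{Step 3: the two implications.} If $C$ is skew polycyclic, then $\Omega_f^r(C)$ is an additive group closed under left multiplication by $R$ (Step 2) and by $x$ (Step 1). By induction on degree and additivity, it is closed under left multiplication by every $g(x)=\sum g_jx^j$. Conversely, if $\Omega_f^r(C)$ is a left submodule, it is closed under $x\cdot$, so by Step 1 $C$ is invariant under $T_{\sigma,\vec a}$. By \cite[Lemma 2]{bajalan2023sigma}, $C$ is then skew $(\sigma,\vec a)$-polycyclic.

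\textbf{Main obstacle.} The delicate point is the induction in Step 3, i.e.\ making precise what ``left $R[x;\sigma]$-module'' means on $\mathcal{R}_f^r$ when $f$ is not central and $\mathcal{R}_f^r$ need not be associative. I would treat $\mathcal{R}_f^r$ via the left-module isomorphism with $R[x;\sigma]/R[x;\sigma]f$. That quotient is a genuine left module, because $R[x;\sigma]f$ is a left ideal, so $g\cdot c = \operatorname{rem}_r(gc,f)$ is a well-defined module action.

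With this, the remainder of $x^j c$ can be computed iteratively as $x\cdot(x\cdot(\cdots x\cdot c))$. Each step is justified because if $h\equiv h' \pmod{R[x;\sigma]f}$ then $xh\equiv xh'$. The key fact is that $x\cdot(qf)=(xq)f\in R[x;\sigma]f$, and this uses only that $R[x;\sigma]f$ is a left ideal, not that $f$ is central.
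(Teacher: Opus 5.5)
The paper gives no proof of this lemma; it simply quotes it from \cite[Lemma 4]{bajalan2023sigma}, so there is no in-paper argument to compare against. Your proof is correct and complete, and it is the standard argument that fits the paper's own setup ($\mathcal{R}_f^r\cong R[x;\sigma]/\langle f\rangle_l$ as left modules, with $T_{\sigma,\vec a}(\vec c)=\sigma(\vec c)C_f$): $\Omega_f^r$ carries the shift $T_{\sigma,\vec a}$ to left multiplication by $x$, and scalars act coordinatewise. You also handle the non-central case properly. The possible non-associativity that the paper points out through its associator formula concerns only the product on $\mathcal{R}_f^r$. The left $R[x;\sigma]$-module action needs nothing beyond $R[x;\sigma]f$ being a left ideal, which is exactly what you use.
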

If $f$ is central, then $\mathcal{R}_f^r$ is an associative ring. Consequently, Lemma~\ref{lm23} yields the following ideal-theoretic characterization.
\begin{corollary}
Let $f(x)$ be central. Then a linear code $C \subseteq R^n$ is skew $(\sigma, \vec{a})$-polycyclic if and only if $\Omega_f^r(C)$ is a left ideal of $\mathcal{R}_f^r$.
\end{corollary}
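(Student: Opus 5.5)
The plan is to deduce the corollary from Lemma~\ref{lm23}. The only thing to add is that, once $f$ is central, a left $R[x;\sigma]$-submodule of $\mathcal{R}_f^r$ is the same thing as a left ideal of the ring $\mathcal{R}_f^r$.

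First I record the ring structure. Because $f$ is central, the left ideal $\langle f\rangle_l = R[x;\sigma]f$ equals $fR[x;\sigma]$, so it is a two-sided ideal. As noted before Lemma~\ref{lm23}, the associator in \eqref{associator} then vanishes. Hence $\mathcal{R}_f^r$, with multiplication $g\cdot h=\operatorname{rem}_r(gh,f)$, is an associative ring, and the map $R[x;\sigma]\to\mathcal{R}_f^r$, $g\mapsto \operatorname{rem}_r(g,f)$, is a surjective ring homomorphism with kernel $\langle f\rangle_l$.

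Second, I compare the two module structures on $\mathcal{R}_f^r$. The left $R[x;\sigma]$-module structure is $k * g = \operatorname{rem}_r(kg,f)$ for $k\in R[x;\sigma]$ and $g\in\mathcal{R}_f^r$. Write $k = qf + \bar k$ with $\bar k=\operatorname{rem}_r(k,f)$. Centrality of $f$ gives $kg = qfg+\bar k g = (qg)f + \bar k g$. Therefore
\begin{align*}
\operatorname{rem}_r(kg,f)=\operatorname{rem}_r(\bar k g,f)=\bar k\cdot g .
\end{align*}
So the action of $k$ is the ring multiplication by $\bar k$. Since $\mathcal{R}_f^r\subseteq R[x;\sigma]$, every element of the ring acts as itself.

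Third, I conclude. Let $I=\Omega_f^r(C)$, an additive subgroup of $\mathcal{R}_f^r$. By the second step, $I$ is closed under the $R[x;\sigma]$-action exactly when it is closed under left ring multiplication by elements of $\mathcal{R}_f^r$, that is, exactly when $I$ is a left ideal. Combined with Lemma~\ref{lm23}, this gives the claimed equivalence.

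The only point needing care is the second step. There the centrality of $f$ is what allows $qfg$ to be rewritten as $(qg)f$, which lies in $\langle f\rangle_l$. The remaining steps are formal.
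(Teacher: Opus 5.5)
Your proposal is correct and follows the paper's route: the paper also deduces the corollary directly from Lemma~\ref{lm23}, relying only on the fact that centrality of $f$ makes $\langle f\rangle_l$ two-sided, so that $\mathcal{R}_f^r$ is an associative ring. Your second step shows that the action $k*g=\operatorname{rem}_r(kg,f)$ equals ring multiplication by $\operatorname{rem}_r(k,f)$, which makes explicit the identification of left $R[x;\sigma]$-submodules with left ideals that the paper leaves implicit.
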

Throughout the remainder of this paper, we restrict attention to skew $(\sigma,f)$-polycyclic codes defined by trinomials of the form
\begin{align}
    f(x)=x^n-(a_\ell x^\ell+a_0) \in R[x;\sigma], \quad 1\leqslant \ell\leqslant n-1.
\end{align}
In order to deal with these codes as left ideals of $\mathcal{R}_{f}^{r} $, we will always assume that the trinomial $f(x)$ is central.
 We define    the set of binomials of degree $\ell$, non-zero constant term and coefficients in $(R^\sigma)^\times$ as
\begin{align}
    B_{\ell,\sigma}:=\{a(x)=a_\ell x^\ell + a_0 \colon a_0, a_\ell \in (R^\sigma)^\times\},
\end{align}
endowed with the Schur product as an internal operation
\begin{align}\label{schur}
(a_\ell x^\ell+a_0)\star (b_\ell x^\ell+b_0):=a_\ell b_\ell x^\ell+a_0 b_0.
\end{align}
Then, $(B_{\ell,\sigma},\star)$ is an abelian group with identity $e(x)=x^\ell+1$ and the inverse of an element given by
\begin{align}
    (b_\ell x^\ell+b_0)^{-1}=b_\ell^{-1}x^\ell+b_0^{-1}.
\end{align}
Since both $b_\ell,b_0 \in (R^\sigma)^\times$ and $\sigma$ is an automorphism, we have $b_\ell^{-1},b_0^{-1} \in (R^\sigma)^\times$. Therefore, $(b_\ell x^\ell+b_0)^{-1}\in B_{\ell,\sigma}$. We also define the set of binomials given by
\begin{align}
    H_{\ell,\sigma}:=\left\{N_{n-\ell}^{\sigma}(\sigma^\ell(\alpha)) x^\ell + N_n^\sigma(\alpha)\colon\alpha \in R^\times\right\}.
\end{align}

% ====================================================================
% Hamming equivalence of skew polycyclic codes
% ====================================================================

\section{Hamming equivalence of skew $(f, \sigma)$-polycyclic codes}\label{eq}
In this section, we define an equivalence relation on trinomials and show that it admits an algebraic description in terms of the set $H_{\ell,\sigma}$. Based on that equivalence, we will define the Hamming $(n,\sigma)$-equivalence of skew polycyclic codes associated with central trinomials.  We then determine which skew $(\sigma,f)$-polycyclic codes are Hamming $(n,\sigma)$-equivalent to those defined by the trinomial $x^n-(x^\ell+1)$, thereby reducing the classification problem up to equivalence  to this distinguished case.

\begin{definition}[Hamming $(n,\sigma)$-isometry]
Let $f_1(x)=x^n-a(x)$ and $f_2(x)=x^n-b(x)$ be central trinomials in $R[x;\sigma]$. A map $\varphi\colon\mathcal{R}^r_{f_1} \to \mathcal{R}^r_{f_2} $ is called a \emph{Hamming $(n, \sigma)$-isometry} if 
$\varphi$ is a ring isomorphism that preserves the Hamming distance:
\begin{align}
    d_H\left(\varphi(g), \varphi(h)\right)=d_H\left(g, h\right),\quad \text{ for all}\,\, g, h\in \mathcal{R}^r_{f_1}.
\end{align}    
\end{definition}

\begin{definition}[Hamming $(n,\sigma)$-equivalence of codes]
 Let $f_1(x)=x^n-a(x)$ and $f_2(x)=x^n-b(x)$ be central trinomials in $R[x;\sigma]$. Let $C_1 \subseteq \mathcal{R}_{f_1}^r$ be a skew $(\sigma,f_1)$-polycyclic code  and $C_2 \subseteq \mathcal{R}_{f_2}^r$ be a skew $(\sigma,f_2)$-polycyclic code. The codes $C_1$ and $C_2$ are \emph{Hamming $(n,\sigma)$-equivalent} if there exists a Hamming $(n,\sigma)$-isometry $\varphi:\mathcal{R}_{f_1}^r \longrightarrow \mathcal{R}_{f_2}^r$ such that $\varphi(C_1)=C_2$.   
\end{definition}

\begin{definition}[Hamming $(n,\sigma)$-equivalence of binomials]\label{Def_Iso}
Let $f_1(x)=x^n-a(x)$ and $f_2(x)=x^n-b(x)$ be central trinomials in $R[x;\sigma]$. The binomial $a(x)$ is $(n, \sigma)$-equivalent to the binomial $b(x)$, denoted by $a(x) \sim_{(n,\sigma)} b(x)$, if there exists a unit $\alpha \in R^\times$ such that the map
\begin{align}
    \varphi_{\alpha} \colon\mathcal{R}^r_{f_1} &\longrightarrow \mathcal{R}^r_{f_2}\nonumber \\
    \sum_{i=0}^{n-1} g_i x^i &\longmapsto \sum_{i=0}^{n-1} g_i N_i^\sigma(\alpha) x^i \label{isometry}
\end{align}
is a Hamming $(n, \sigma)$-isometry.  
\end{definition}

\begin{remark}\label{Re_1}
Since $(\alpha x)^i = N_i^\sigma(\alpha)x^i$ by Lemma \ref{norm-pro}, 
$
    \varphi_\alpha(g(x))=g(\alpha x)$ for all $  g(x)\in \mathcal{R}_{f_1}^r.
$ 
\end{remark}

\begin{remark}\label{new-re}
For $0\leqslant i\leqslant n-1$, $N_i^\sigma(\alpha)\in R^\times$ by Lemma~\ref{norm-pro}. Hence, $g_i=0$ if and only if $g_iN_i^\sigma(\alpha)=0$.
Therefore, $\varphi_\alpha$ preserves Hamming weight, and hence Hamming distance.
\end{remark}

\begin{proposition}
The relation $\sim_{(n,\sigma)}$ is an equivalence relation.
\end{proposition}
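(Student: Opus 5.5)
We verify reflexivity, symmetry and transitivity directly from Definition~\ref{Def_Iso}. By Remark~\ref{new-re}, every $\varphi_\alpha$ with $\alpha\in R^\times$ automatically preserves Hamming distance. So the only thing to check is that the relevant map is a ring isomorphism between the appropriate quotients $\mathcal{R}^r_{f_i}$. Throughout we use Remark~\ref{Re_1}, that is $\varphi_\alpha(g(x))=g(\alpha x)$, together with Lemma~\ref{hom_le}.

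\emph{Reflexivity.} Take $\alpha=1$. Then $N_i^\sigma(1)=1$ for all $i$ by Lemma~\ref{norm-pro}, so $\varphi_1$ is the identity on $\mathcal{R}^r_{f_1}$. This is trivially a Hamming $(n,\sigma)$-isometry, and hence $a(x)\sim_{(n,\sigma)}a(x)$.

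\emph{Symmetry.} Suppose $\varphi_\alpha\colon\mathcal{R}^r_{f_1}\to\mathcal{R}^r_{f_2}$ is a Hamming $(n,\sigma)$-isometry. We show that $\varphi_{\alpha^{-1}}\colon\mathcal{R}^r_{f_2}\to\mathcal{R}^r_{f_1}$ is its inverse. By Lemma~\ref{norm-pro}(2),
\[
\varphi_{\alpha^{-1}}\bigl(\varphi_\alpha(g)\bigr)=\sum_i g_i N_i^\sigma(\alpha)N_i^\sigma(\alpha^{-1})x^i=g,
\]
and the same computation in the other order gives $\varphi_\alpha\circ\varphi_{\alpha^{-1}}=\mathrm{id}$. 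Here it matters that the coefficient maps act coordinatewise on remainders of degree $<n$, so no reduction modulo $f_i$ is involved and the composition really is coordinatewise. Hence $\varphi_{\alpha^{-1}}=\varphi_\alpha^{-1}$. The inverse of a ring isomorphism is a ring isomorphism, and Remark~\ref{new-re} gives that it preserves distance. Therefore $b(x)\sim_{(n,\sigma)}a(x)$.

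\emph{Transitivity.} Suppose $\varphi_\alpha\colon\mathcal{R}^r_{f_1}\to\mathcal{R}^r_{f_2}$ and $\varphi_\beta\colon\mathcal{R}^r_{f_2}\to\mathcal{R}^r_{f_3}$ are Hamming $(n,\sigma)$-isometries. Their composition $\varphi_\beta\circ\varphi_\alpha$ is a ring isomorphism preserving distance, so it suffices to show it has the form $\varphi_\gamma$ for some unit $\gamma$. We compute
\[
\varphi_\beta\circ\varphi_\alpha\Bigl(\sum g_ix^i\Bigr)=\sum g_i\,N_i^\sigma(\alpha)\,N_i^\sigma(\beta)\,x^i=\sum g_i\,N_i^\sigma(\alpha\beta)\,x^i
\]
by Lemma~\ref{norm-pro}(3), using commutativity of $R$. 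So $\varphi_\beta\circ\varphi_\alpha=\varphi_{\alpha\beta}$ with $\alpha\beta\in R^\times$, and hence $a(x)\sim_{(n,\sigma)}c(x)$.

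The only potential subtlety is the one already flagged in the symmetry step: the maps $\varphi_\alpha$ must be read as coordinatewise maps on remainders rather than as substitution followed by reduction, so that compositions and inverses are computed without any reduction. Once that is observed, everything is routine and uses only the multiplicativity of the norms $N_i^\sigma$ from Lemma~\ref{norm-pro}.
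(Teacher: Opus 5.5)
Your proof is correct and matches the paper's argument essentially step for step: $\alpha=1$ for reflexivity, $\varphi_{\alpha^{-1}}=\varphi_\alpha^{-1}$ via multiplicativity of the $\sigma$-norms for symmetry, and $\varphi_\beta\circ\varphi_\alpha=\varphi_{\alpha\beta}$ for transitivity. You check the composition in both orders, whereas the paper checks only $\varphi_{\alpha^{-1}}\circ\varphi_\alpha=\mathrm{id}$; that single check already suffices because $\varphi_\alpha$ is assumed bijective.
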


\begin{proof}
Let $a(x), b(x), c(x)$ be such that $x^n-a(x)$, $x^n-b(x)$, and $x^n-c(x)$ are central trinomials in $R[x;\sigma]$.
For proving reflexivity, take $\alpha=1$. Then $N_i^\sigma(1)=1$ for all $i$, so $\varphi_1=\id$, and hence $a(x)\sim_{(n,\sigma)} a(x)$. 
In order to check symmetry, suppose $a(x)\sim_{(n,\sigma)} b(x)$ via $\alpha\in R^\times$. Then $\varphi_\alpha$ is a Hamming $(n,\sigma)$-isometry. For $g(x)=\sum_{i=0}^{n-1} g_i x^i$, Lemma~\ref{norm-pro} yields
\begin{align}
    (\varphi_{\alpha^{-1}} \circ\varphi_\alpha)(g(x)) = \sum_{i=0}^{n-1} g_i N_i^\sigma(\alpha)N_i^\sigma(\alpha^{-1})x^i = g(x).
\end{align}
Thus $\varphi_{\alpha^{-1}}=(\varphi_\alpha)^{-1}$ is a ring isomorphism. Moreover, for $h(x)=\sum_{i=0}^{n-1} h_i x^i \in \mathcal R_{x^n-b(x)}^r$,
\begin{align}
    \varphi_{\alpha^{-1}}(h(x))= \sum_{i=0}^{n-1} h_i N_i^\sigma(\alpha^{-1})x^i.
\end{align}
Since $N_i^\sigma(\alpha^{-1})\in R^\times$ for all $i$ (Lemma \ref{norm-pro}), zero coefficients are preserved, and therefore $\operatorname{wt}_H(\varphi_{\alpha^{-1}}(h))=\operatorname{wt}_H(h)$. It follows that $b(x)\sim_{(n,\sigma)} a(x)$. Finally,  for proving transitivity, suppose that $a(x)\sim_{(n,\sigma)} b(x)$ via $\alpha\in R^\times$ and $b(x)\sim_{(n,\sigma)} c(x)$ via $\beta\in R^\times$. Then, for any $g(x)=\sum_{i=0}^{n-1} g_i x^i \in \mathcal R_{x^n-a(x)}^r$, Lemma \ref{norm-pro} gives
\begin{align}
    (\varphi_\beta \circ \varphi_\alpha)(g(x))=\sum_{i=0}^{n-1} g_i N_i^\sigma(\alpha) N_i^\sigma(\beta)x^i=\varphi_{\alpha\beta}(g(x)).
\end{align}
Thus $\varphi_\beta \circ \varphi_\alpha = \varphi_{\alpha\beta}$, which is a Hamming $(n,\sigma)$-isometry. Therefore $a(x)\sim_{(n,\sigma)} c(x)$.
\end{proof}

The next theorem provides a complete algebraic characterization of Hamming $(n,\sigma)$-equivalence of central trinomials in terms of the subgroup $H_{\ell,\sigma}$.
\begin{theorem}\label{Th.1}
Let $x^n-a(x)$ and $x^n-b(x)$ be central trinomials in $R[x;\sigma]$, where $a(x)=a_\ell x^\ell+a_0$ and $ b(x)=b_\ell x^\ell+b_0$  are in  $ B_{\ell,\sigma}$. Then, the following statements are equivalent:
\begin{enumerate}
    \item $ a(x) \sim_{(n,\sigma)} b(x)$; 
    \item   There exists  $\alpha\in R^\times$  such that $a_0 = b_0 N_n^{\sigma}(\alpha)$ and $a_{\ell}= b_{\ell} N^{\sigma}_{n-\ell}(\sigma^{\ell}(\alpha))$; 
    \item  There exists $\alpha\in R^\times$  such that $b(x)\star \left( N_{n-\ell}^{\sigma}(\sigma^{\ell} (\alpha)) x^{\ell} +  N_n^{\sigma}(\alpha)\right) = a(x)$;
    \item  $a(x)\star b(x)^{-1} \in H_{\ell,\sigma}$. 
 % H_{\ell,\sigma}:= \{ N_{n-\ell}^{\sigma}(\sigma^{\ell} (\alpha)) x^{\ell} +  N_n^{\sigma}(\alpha) ~:~ \alpha \in R^\times\}
\end{enumerate}
\end{theorem}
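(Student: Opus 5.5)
The plan is to prove the cycle (1)$\Rightarrow$(2)$\Rightarrow$(1), and then handle (2)$\Leftrightarrow$(3)$\Leftrightarrow$(4) as pure group-theoretic rewriting in $(B_{\ell,\sigma},\star)$. Throughout, I use Remark~\ref{Re_1}: $\varphi_\alpha(g(x))=g(\alpha x)$. By Lemma~\ref{hom_le}, $x\mapsto\alpha x$ is a ring endomorphism of $R[x;\sigma]$. By Remark~\ref{new-re}, $\varphi_\alpha$ always preserves Hamming distance and is bijective on remainders of degree $<n$, since the coefficients $N_i^\sigma(\alpha)$ are units. So the only real content of (1) is that $\varphi_\alpha$ is a ring homomorphism $\mathcal{R}^r_{f_1}\to\mathcal{R}^r_{f_2}$, with $f_1=x^n-a(x)$ and $f_2=x^n-b(x)$.

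\textbf{(2)$\Rightarrow$(1).} Let $\psi\colon R[x;\sigma]\to R[x;\sigma]$ be the substitution $x\mapsto\alpha x$. I will show that $\psi(f_1)$ is a unit multiple of $f_2$. By Lemma~\ref{norm-pro}(6), $\psi(f_1)=N_n^\sigma(\alpha)x^n-a_\ell N_\ell^\sigma(\alpha)x^\ell-a_0$. Using (2) and Lemma~\ref{norm-pro}(4), we get $N_n^\sigma(\alpha)=N_{n-\ell}^\sigma(\sigma^\ell(\alpha))N_\ell^\sigma(\alpha)$. Hence
\[a_\ell N_\ell^\sigma(\alpha)=b_\ell N_{n-\ell}^\sigma(\sigma^\ell(\alpha))N_\ell^\sigma(\alpha)=b_\ell N_n^\sigma(\alpha),\qquad a_0=b_0N_n^\sigma(\alpha).\]
Since $R$ is commutative, this gives $\psi(f_1)=N_n^\sigma(\alpha)\,(x^n-b_\ell x^\ell-b_0)=N_n^\sigma(\alpha)f_2$. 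Consequently $\psi(R[x;\sigma]f_1)\subseteq R[x;\sigma]f_2$, so $\psi$ descends to a ring homomorphism between the quotients. Because $f_1,f_2$ are central, these quotients are identified with $\mathcal{R}^r_{f_1},\mathcal{R}^r_{f_2}$. On remainders the induced map is exactly $\varphi_\alpha$, which is bijective. This gives a Hamming $(n,\sigma)$-isometry.

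\textbf{(1)$\Rightarrow$(2).} This is the step I expect to be the main obstacle, since we only know that $\varphi_\alpha$ is multiplicative on remainders. I would compute $\varphi_\alpha(x^{n-1}\cdot x)$ in two ways. In $\mathcal{R}^r_{f_1}$ we have $x^{n-1}\cdot x=x^n\equiv a_\ell x^\ell+a_0$, so $\varphi_\alpha$ sends it to $a_\ell N_\ell^\sigma(\alpha)x^\ell+a_0$ (using $\ell\le n-1$). On the other hand,
\[\varphi_\alpha(x^{n-1})\cdot\varphi_\alpha(x)=N_{n-1}^\sigma(\alpha)x^{n-1}\alpha x=N_{n-1}^\sigma(\alpha)\sigma^{n-1}(\alpha)x^n,\]
which by Lemma~\ref{norm-pro}(5) equals $N_n^\sigma(\alpha)x^n$. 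Reducing modulo $f_2$ gives $N_n^\sigma(\alpha)(b_\ell x^\ell+b_0)$. Comparing coefficients yields
\[a_0=b_0N_n^\sigma(\alpha),\qquad a_\ell N_\ell^\sigma(\alpha)=b_\ell N_n^\sigma(\alpha).\]
Cancelling the unit $N_\ell^\sigma(\alpha)$ via Lemma~\ref{norm-pro}(4) then gives $a_\ell=b_\ell N_{n-\ell}^\sigma(\sigma^\ell(\alpha))$, which is (2). The care needed here is in ordering the noncommutative products correctly and in checking that no extra reduction occurs.

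\textbf{(2)$\Leftrightarrow$(3)$\Leftrightarrow$(4).} Statement (3) is (2) rewritten with the Schur product \eqref{schur}, by definition. For (4), I first note that the binomial $h_\alpha:=N_{n-\ell}^\sigma(\sigma^\ell(\alpha))x^\ell+N_n^\sigma(\alpha)$ lies in $B_{\ell,\sigma}$. Its coefficients are units by Lemma~\ref{norm-pro}(2), and they are $\sigma$-fixed because $\sigma^n=\id$ (Lemma~\ref{m20}): Lemma~\ref{norm-pro}(7) covers $N_n^\sigma(\alpha)$, and for the other coefficient I use $N_n^\sigma(\alpha)=N_{n-\ell}^\sigma(\sigma^\ell(\alpha))N_\ell^\sigma(\alpha)$ in the same way. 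If this $\sigma$-fixedness proves awkward, it is not needed for the equivalence, since we can also read off from (3) that $h_\alpha=a\star b^{-1}$ coefficientwise. Then (3) holds if and only if $a\star b^{-1}=h_\alpha$ in the abelian group $B_{\ell,\sigma}$, which is (4) by the definition of $H_{\ell,\sigma}$.
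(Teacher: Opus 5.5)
Your proof is correct and is essentially the paper's argument. The converse rests on the substitution homomorphism $x\mapsto\alpha x$ of Lemma~\ref{hom_le} together with $\psi(f_1)=N_n^\sigma(\alpha)f_2$. For (1)$\Rightarrow$(2) you reduce $N_n^\sigma(\alpha)x^n$ modulo $f_2$ and cancel $N_\ell^\sigma(\alpha)$ using Lemma~\ref{norm-pro}(4). Two of your variations are improvements: computing $\varphi_\alpha(x^{n-1}\cdot x)$ justifies the paper's ``well-definedness'' step more cleanly, and obtaining bijectivity from the unit coefficients replaces the paper's kernel computation.

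One correction: $\sigma^n=\id$ alone does not make $N_{n-\ell}^\sigma(\sigma^\ell(\alpha))$ $\sigma$-fixed; you also need $\sigma^\ell=\id$. That does hold here, because Lemma~\ref{m20}(2) with $a_\ell$ a unit forces it. In any case your coefficientwise fallback makes this point unnecessary.
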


\begin{proof}\
\begin{itemize}
    \item[] \rm(1) $\Rightarrow$ \rm(2). Let $a(x) \sim_{(n,\sigma)} b(x)$. Then, there exists $\alpha \in R^{\times}$ such that the map $ \varphi_{\alpha} \colon\mathcal{R}^r_{x^n - a(x)}\to \mathcal{R}^r_{x^n - b(x)}$ given by $\sum_{i=0}^{n-1} g_i x^i \mapsto \sum_{i=0}^{n-1} g_i N_i^\sigma(\alpha) x^i$ is an isomorphism. Since $\varphi_\alpha$ is well-defined,  $\varphi_\alpha(x^n - a_\ell x^\ell - a_0) = 0$ in the quotient ring $\mathcal{R}^r_{x^n - b(x)}$. We note that
    \begin{align}
        \varphi_\alpha(x^n - a_\ell x^\ell - a_0) = N_n^\sigma(\alpha) x^n - a_\ell N_\ell^\sigma(\alpha) x^\ell - a_0.
    \end{align}
    Applying the relation $x^n = b_\ell x^\ell + b_0$ in the quotient ring $ \mathcal{R}^r_{x^n - b(x)}$, we obtain
    \begin{align}\label{well_def}
        N_n^\sigma(\alpha) (b_\ell x^\ell + b_0) - a_\ell N_\ell^\sigma(\alpha) x^\ell - a_0 = 0,
    \end{align}
    which implies
    \begin{align}
        \left(N_n^\sigma(\alpha) b_\ell - a_\ell N_\ell^\sigma(\alpha) \right) x^\ell + \left( N_n^\sigma(\alpha) b_0 - a_0 \right)= 0.
    \end{align}
    Hence, $a_0 = N_n^\sigma(\alpha) b_0$ and $a_\ell= b_\ell N_n^\sigma(\alpha) (N_\ell^\sigma(\alpha))^{-1}$. Applying Lemma \ref{norm-pro} gives
    \begin{align}
        a_\ell = b_\ell N_{n-\ell}^\sigma\left(\sigma^\ell(\alpha)\right)N_\ell^\sigma(\alpha)(N_\ell^\sigma(\alpha))^{-1}= b_\ell N_{n-\ell}^{\sigma}(\sigma^{\ell}(\alpha)).
    \end{align}

    \item[] \rm(2) $\Rightarrow$ \rm(3). If $h(x) := N_{n- \ell}^\sigma(\sigma^\ell(\alpha))x^\ell + N_n^\sigma(\alpha)$, then
    \begin{align*}
        b(x) \star h(x)= \left( b_\ell N_{n-\ell}^\sigma(\sigma^\ell(\alpha)) \right)x^\ell + \left( b_0 N_n^\sigma(\alpha) \right)=a_\ell x^\ell + a_0 = a(x).
    \end{align*}

    \item[] \rm(3) $\Rightarrow$ \rm(4). This is straightforward.

    \item[] \rm(4) $\Rightarrow$ \rm(1).  Suppose that $ b^{-1}(x)\star a(x)\in H_{\ell,\sigma} $.  Then, there exists $\alpha\in R^\times$ such that 
    \begin{align}
        b_{\ell}^{-1}a_{\ell}x^{\ell}+b_0^{-1}a_0= b^{-1}(x)\star a(x) =N_{n-\ell}^{\sigma}(\sigma^{\ell} (\alpha)) x^{\ell} +  N_n^{\sigma}(\alpha).
    \end{align}
    Hence
    \begin{align}\label{eq-co}
        a_{\ell}=b_{\ell} N_{n-\ell}^{\sigma}(\sigma^{\ell} (\alpha))\quad \text{and}\quad a_0= b_0 N_n^{\sigma}(\alpha).
    \end{align}
    Let $\pi: R[x;\sigma]\to \mathcal R^r_{x^n-b(x)}$ be the canonical projection and define $\Phi_\alpha: R[x;\sigma]\to R[x;\sigma]$ via $g(x)\mapsto g(\alpha x)$. By Lemma~\ref{hom_le}, $\Phi_\alpha$ is a ring homomorphism. Set $\widetilde{\varphi}_\alpha := \pi\circ \Phi_\alpha$. Then $\widetilde{\varphi}_\alpha$ is a ring homomorphism. By  \eqref{eq-co} and Lemma \ref{norm-pro},  we obtain
    \begin{align}
        \widetilde{\varphi}_{\alpha} (x^n - a_{\ell} x^{\ell} - a_0) & = N^{\sigma}_n(\alpha) x^n - N^{\sigma}_{\ell}(\alpha) a_{\ell} x^{\ell} - a_0\nonumber \\
        & = N^{\sigma}_n(\alpha) x^n - N^{\sigma}_{\ell}(\alpha)  N_{n-\ell}^{\sigma}( \sigma^{\ell}(\alpha)) b_{\ell} x^{\ell} - N^{\sigma}_n(\alpha) b_0\nonumber\\
        & = N^{\sigma}_n(\alpha) (x^n - b_{\ell} x^{\ell} - b_0)= 0 \mod \mathcal{R}_{x^n-b(x)}^r
    \end{align}
     Therefore, $\langle x^n - a_{\ell} x^{\ell} - a_0\rangle_l \subseteq \ker\widetilde{\varphi}_{\alpha}$. Conversely, let $h(x)\in \ker\widetilde{\varphi}_{\alpha}$. Then, $h(\alpha x)=0 \mod \mathcal{R}_{x^n-b(x)}^r$, so there exists $g(x)\in R[x;\sigma]$ such that $h(\alpha x)=g(x)\left(x^{n}- b_{\ell} x^{\ell} - b_0\right)$. Replacing  $x$ by $\alpha^{-1}x$ and applying Lemma \ref{norm-pro} yields
    \begin{align}
        h(x)&=g(\alpha^{-1}x)\Big( N^{\sigma}_n(\alpha^{-1}) x^{n}- N^{\sigma}_{\ell}(\alpha^{-1}) b_{\ell} x^{\ell} - b_0\Big)\nonumber \\
        & =N^{\sigma}_n(\alpha^{-1}) g(\alpha^{-1}x)\Big(x^{n}- N^{\sigma}_{n-\ell}(\sigma^{\ell}(\alpha)) b_{\ell} x^{\ell} - N^{\sigma}_n(\alpha) b_0\Big)\nonumber\\
        &= N^{\sigma}_n(\alpha^{-1}) g(\alpha^{-1}x)\left( x^n-  a_{\ell} x^{\ell} - a_0\right),
    \end{align}
    where \eqref{eq-co} is used in the last equality.
    Therefore, $\ker\widetilde{\varphi}_{\alpha} = \langle x^{n}-  a_{\ell} x^{\ell} - a_0 \rangle_l$. The first isomorphism theorem induces the ring isomorphism
    \begin{align}
        \varphi_{\alpha} \colon \mathcal{R}^r_{x^n - a(x)}  & \longrightarrow  \mathcal{R}^r_{x^n - b(x)},\nonumber \\ 
        g(x)=\sum_{i=0}^{n-1} g_i x^i &\longmapsto \sum_{i=0}^{n-1} g_i N_i^\sigma(\alpha) x^i=g(\alpha x),
    \end{align}
     which, by Remark \ref{new-re}, is a Hamming $(n,\sigma)$-isometry. Hence $a(x)\sim_{(n,\sigma)} b(x)$.
\end{itemize}
\end{proof}

\begin{example}\label{example 1}
The ring $R=\mathbb{Z}_{8}[u] / \langle u^2 - 2, 4u\rangle$ is a chain ring consisting of elements of the form $ a + b u$, where $a \in \mathbb{Z}_{8}$, $b \in \mathbb{Z}_{4}$ and $u^2 - 2=0$. The set of unit elements of $R$ is $\{a + b u \mid a\in U(\mathbb Z_8)\, ,\, b \in \mathbb Z_4\}$. The automorphism $\sigma(a+bu)=a+3bu$ of $R$ has order $2$. Take $n=4$ and $\ell=2$. The polynomials $x^4-(7x^2+1)$ and   $x^4-(x^2+1)$ are central by Lemma \ref{Cor_Asigma}. Let $\alpha=1+u\in R^\times$, and set
\begin{align}
    a(x)=a_2x^2+a_0=7x^2+1, \quad b(x)=b_2x^2+b_0=x^2+1.
\end{align}
Then $a(x),b(x)\in B_{2,\sigma}$, and $N_2^\sigma(\alpha)=7$ and $N_4^\sigma(\alpha)=1$.  Hence,
\begin{align}
    a_0=1=b_0N_4^\sigma(\alpha),\quad a_2=7=b_2N_2^\sigma(\sigma^2(\alpha)).
\end{align}
Thus condition {\rm(2)} of Theorem~\ref{Th.1} holds, so $a(x)\sim_{(4,\sigma)} b(x)$.
\end{example}

The next result identifies precisely which skew $(\sigma,f)$-polycyclic codes are Hamming $(n,\sigma)$-equivalent to those defined by the trinomial $x^n - (x^\ell + 1)$.
\begin{theorem}\label{Equiv_1}
Let $f_1(x)=x^n-a(x)$ and $f_2(x)=x^n-(x^{\ell}+1)$ be central trinomials in $R[x;\sigma]$, where $a(x)=a_\ell x^\ell+a_0\in B_{\ell,\sigma}$. Then, the following statements are equivalent:
\begin{enumerate}
    \item $a(x) \sim_{(n,\sigma)} x^\ell + 1$;
 
    \item  The class of skew $(\sigma,f_1)$-polycyclic codes is Hamming $(n,\sigma)$-equivalent to the class of skew $(\sigma,f_2)$-polycyclic codes via a Hamming $(n,\sigma)$-isometry $\varphi:\mathcal R_{f_1}^r \to \mathcal R_{f_2}^r$ such that $\varphi(r)=r$ for all $r\in R$.
 \end{enumerate}
\end{theorem}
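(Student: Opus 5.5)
The direction (1) $\Rightarrow$ (2) is essentially immediate. If $a(x)\sim_{(n,\sigma)} x^\ell+1$, then by Definition~\ref{Def_Iso} there is $\alpha\in R^\times$ such that $\varphi_\alpha$ is a Hamming $(n,\sigma)$-isometry $\mathcal R^r_{f_1}\to\mathcal R^r_{f_2}$. By construction, $\varphi_\alpha$ sends a constant $r=r x^0$ to $rN_0^\sigma(\alpha)=r$, so it fixes $R$. Being a ring isomorphism, it maps left ideals to left ideals bijectively. By the Corollary following Lemma~\ref{lm23}, it therefore induces a bijection between skew $(\sigma,f_1)$- and skew $(\sigma,f_2)$-polycyclic codes, with $\varphi(C_1)=C_2$ for corresponding codes.

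For (2) $\Rightarrow$ (1), I start from a Hamming $(n,\sigma)$-isometry $\varphi$ which is a ring isomorphism fixing $R$ pointwise, and produce an $\alpha\in R^\times$ with $\varphi=\varphi_\alpha$. It then suffices to verify condition (2) of Theorem~\ref{Th.1} with $b_\ell=b_0=1$, or to use (1) directly.

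\emph{Step 1: weight forces $\varphi(x)$ to be a monomial.} Since $\varphi$ preserves Hamming distance and is additive, it preserves Hamming weight. Now $x$ has weight $1$, so $\varphi(x)=\beta x^k$ for some $\beta\in R\setminus\{0\}$ and some $0\le k\le n-1$. More generally, every $\varphi(x^i)$ is a monomial.

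\emph{Step 2: show $\beta$ is a unit.} I argue via additivity and $R$-linearity: $\varphi(rx)=\varphi(r)\varphi(x)=r\beta x^k$. If $\beta\in\mathfrak m$, take $r=\gamma^{e-1}$; then $\varphi(\gamma^{e-1}x)=\gamma^{e-1}\beta x^k=0$, although $\gamma^{e-1}x\neq0$. This contradicts injectivity, so $\beta\in R^\times$.

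\emph{Step 3: show $k=1$.} This is the main obstacle. I would use the commutation relation: in $\mathcal R^r_{f_1}$ we have $xr=\sigma(r)x$, so applying $\varphi$ gives $\beta x^k r=\sigma(r)\beta x^k$, that is, $\beta\sigma^k(r)x^k=\beta\sigma(r)x^k$. Hence $\sigma^{k}=\sigma$, so $k\equiv1\pmod{\mu}$. This alone does not pin down $k$. I would next use weight preservation on the powers $x^i$ for $i<n$. If $k\neq1$, then some power $(\beta x^k)^i$ with $i<n$ wraps past degree $n$, where the reduction $x^n=x^\ell+1$ in $\mathcal R^r_{f_2}$ produces weight $2$, while $x^i$ has weight $1$. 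Concretely, pick the least $i$ with $ik\ge n$, where $i\le n-1$ as soon as $k\ge2$. Then $x^{ik}$ reduces to $x^{ik-n}(x^\ell+1)$, which has weight $2$ when $ik-n+\ell<n$, giving a contradiction. The case $k=0$ is excluded by injectivity, since $\varphi(x)$ would be a constant, already the image of a constant. The range of $ik-n+\ell$ needs care; if it exceeds $n-1$ I would iterate the reduction, or switch to a different exponent. I would also keep in mind that an additional hypothesis such as $\gcd(k,n)$ might be implicitly needed.

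\emph{Step 4: identify $\varphi$ with $\varphi_\alpha$ and conclude.} With $k=1$, set $\alpha=\beta$. Then $\varphi(x)=\alpha x$, and since $\varphi$ is a ring map fixing $R$, we get $\varphi(g)=g(\alpha x)=\varphi_\alpha(g)$ by Remark~\ref{Re_1}. Hence $a(x)\sim_{(n,\sigma)}x^\ell+1$ by definition. Alternatively, well-definedness exactly as in the proof of (1) $\Rightarrow$ (2) of Theorem~\ref{Th.1} yields $a_0=N_n^\sigma(\alpha)$ and $a_\ell=N^\sigma_{n-\ell}(\sigma^\ell(\alpha))$.
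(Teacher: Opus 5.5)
Your direction (1)$\Rightarrow$(2), Steps 1, 2 and 4, and the exclusion of $k=0$ are fine. In fact your Step 2 argument, applied to each $x^i$, already gives $\varphi(x^i)=u_ix^{\pi(i)}$ with $u_i\in R^\times$, so you do not need the MacWilliams Extension Theorem that the paper invokes. The genuine gap is Step 3, which is the heart of (2)$\Rightarrow$(1) and which you leave open. Your concrete strategy is to take the least $i$ with $ik\ge n$ and reduce $x^{ik}$ modulo $x^n-x^\ell-1$. This does not work in general, because iterated reductions can cancel. Take $R=\mathbb F_2$, $\sigma=\id$, $f_1=f_2=x^6-(x^3+1)$ and $\varphi(x)=x^5$. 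The least such $i$ is $2$, and $x^9=x^3(x^3+1)=x^6+x^3=1$, so $\varphi(x^2)=x^{10}=x$ has weight $1$ and gives no contradiction. Here $x\mapsto x^5$ is even a ring automorphism of $\mathbb F_2[x]/(x^6+x^3+1)\cong\mathbb F_{64}$ fixing $\mathbb F_2$; only $\varphi(x^3)=x^{15}=x^3+1$ exposes it. So ``iterate the reduction'' fails, ``switch to a different exponent'' is not yet an argument, and no extra $\gcd$ hypothesis is needed.

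The missing idea is to use that the degree map $\pi$ is a \emph{bijection} of $\{0,\dots,n-1\}$, rather than only looking at powers of $\varphi(x)$. Injectivity and $R$-linearity give this: if $\pi(i)=\pi(i')$ with $i\neq i'$, then $\varphi(u_i^{-1}x^i-u_{i'}^{-1}x^{i'})=0$. For $2\le k\le n-1$, the values $n-k$ and $n-k+1$ lie in $\{1,\dots,n-1\}$ and have distinct preimages under $\pi$. So at least one preimage, call it $i$, satisfies $i\le n-2$. Then $x^{i+1}=x\cdot x^i$ in $\mathcal R^r_{f_1}$ with no reduction, hence $\varphi(x^{i+1})=\beta x^k\cdot u_ix^{\pi(i)}=\beta\sigma^k(u_i)x^{k+\pi(i)}$ with $k+\pi(i)\in\{n,n+1\}$. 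In $\mathcal R^r_{f_2}$ one has $x^n=x^\ell+1$ and $x^{n+1}=x^{\ell+1}+x$, or $x^{n-1}+x+1$ when $\ell=n-1$ (where $n\ge 3$). Both have weight at least $2$ with no possibility of cancellation, contradicting $\operatorname{wt}_H(x^{i+1})=1$. This overshoots degree $n-1$ by a single step, which is exactly what avoids the uncontrolled reductions in your version, and it is the argument the paper uses.
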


\begin{proof}\
\begin{itemize}
    \item[] \rm(1) $\Rightarrow$ \rm(2). By Definition~\ref{Def_Iso}, there exists a Hamming $(n,\sigma)$-isometry $\varphi_\alpha:\mathcal R_{f_1}^r\longrightarrow \mathcal R_{f_2}^r$. Let $C_1\subseteq \mathcal R_{f_1}^r$ be a skew $(\sigma,f_1)$-polycyclic code. Since $f_1$ and $f_2$ are central, skew $(\sigma,f_i)$-polycyclic codes are precisely the left ideals of $\mathcal R_{f_i}^r$. Since $\varphi_\alpha$ is a ring isomorphism, it maps left ideals of $\mathcal R_{f_1}^r$ bijectively onto left ideals of $\mathcal R_{f_2}^r$. Hence, $C_2:=\varphi_\alpha(C_1)$ is a skew $(\sigma,f_2)$-polycyclic code. Now  $\varphi_\alpha$ is a Hamming $(n,\sigma)$-isometry, thus the codes $C_1$ and $C_2$ are Hamming $(n,\sigma)$-equivalent. Therefore, the statement in \rm(2) holds.

    \item[] \rm(2) $\Rightarrow$ \rm(1). By \rm(2), there exists a Hamming \((n,\sigma)\)-isometry $\varphi:\mathcal R_{f_1}^r\longrightarrow \mathcal R_{f_2}^r$. Define
    \begin{align}
         \Psi:=(\Omega_{f_2}^r)^{-1}\circ \varphi\circ \Omega_{f_1}^r:R^n\to R^n.
    \end{align}
    Since $\Omega_{f_1}^r$ and $\Omega_{f_2}^r$ are left $R$-module isomorphisms from $R^n$ onto $\mathcal{R}_{f_1}^r$ and $\mathcal{R}_{f_2}^r$ respectively, and since $\varphi$ is a ring isomorphism fixing $R$ pointwise, $\Psi$ is a left $R$-module isomorphism from the free module $R^n$ to itself. 
    Note that the skew polynomial multiplication in $\mathcal{R}_{f_i}^r$ affects the ring structure but not the left $R$-module structure, so $\mathcal{R}_{f_i}^r \cong R^n$ as left $R$-modules via $\Omega_{f_i}^r$. Moreover, $\Psi$ preserves Hamming weight since $\Omega_{f_i}^r$ is a coefficient-wise identification and $\varphi$ preserves Hamming distance by hypothesis. By the MacWilliams Extension Theorem over finite commutative chain rings \cite{greferath2014macwilliams}, $\Psi$ is therefore a monomial transformation. Thus, there exist a permutation $\pi$ of $\{0,1,\dots,n-1\}$ and units $u_i\in R^\times$ such that
    \begin{align}\label{yadesh}
        \Psi(e_i)=u_i e_{\pi(i)},\quad 0\leqslant i\leqslant n-1 ,
    \end{align}
    where $e_i$ is the vector with $1$ in the $i$-th position and zero elsewhere. Since $\Omega_{f_1}^r(e_1)=x$, we obtain
    \begin{align}\label{eqE1}
        \varphi(x)=\varphi(\Omega_{f_1}^r(e_1))=\Omega_{f_2}^r(\Psi(e_1)).
    \end{align}
    Hence, by \eqref{yadesh} and \eqref{eqE1}, there exist an element $\alpha\in R^\times$ and an index $k\in\{0,1,\dots,n-1\}$ such that
    \begin{align}\label{eqjed}
        \varphi(x)=\alpha x^k.
    \end{align}
    We claim that $k\neq 0$. Indeed, if $k=0$, then $\varphi(x)=\alpha\in R$ by \eqref{eqjed}, and hence $x=\varphi^{-1}(\alpha)\in R$, which is impossible. Therefore, by \eqref{eqjed},
    \begin{align}\label{eqphi-x}
        \varphi(x)=\alpha x^k,\quad 1\leqslant k\leqslant n-1.
    \end{align}

    For each $0\leqslant i\leqslant n-1$, by \eqref{yadesh},
    \begin{align}\label{eq-varphi-xi}
        \varphi(x^i)=\varphi(\Omega_{f_1}^r(e_i))=\Omega_{f_2}^r(\Psi(e_i))=u_i x^{\pi(i)},
    \end{align}
    where $u_i\in R^\times$ and $\pi$ is a permutation of $\{0,1,\dots,n-1\}$. Hence
    \begin{align}\label{tatko}
         \operatorname{wt}_H\bigl(\varphi(x^i)\bigr)=1,\quad \text{for all} \quad 0\leqslant i\leqslant n-1.
    \end{align}

    We will assume, for contradiction, that $k>1$. Define $ j_0:=n-k$ and $j_1:=n-k+1$. Then $j_0,j_1\in\{0,1,\dots,n-1\}$ and $j_0\neq j_1$. Since $\pi$ is a permutation, there exist unique indices $i_0,i_1\in\{0,1,\dots,n-1\}$ such that $\pi(i_0)=j_0$ and $\pi(i_1)=j_1$. As $i_0\neq i_1$, at most one of them equals $n-1$. Hence one of them lies in $\{0,1,\dots,n-2\}$. Choose $i\in\{0,1,\dots,n-2\}$ such that $\pi(i)\in\{j_0,j_1\}$ and write $j:=\pi(i)$. Since $i \leqslant n-2$, we have $i+1 \leq n-1$, so $x^{i+1} = x \cdot x^i$ holds without reduction in $\mathcal{R}_{f_1}^r$, and therefore $\varphi(x^{i+1}) = \varphi(x)\varphi(x^i)$. Thus, using \eqref{eqphi-x} and \eqref{eq-varphi-xi}, we obtain
    \begin{align}
     \varphi(x^{i+1})=\varphi(x)\varphi(x^i)=(\alpha x^k)(u_i x^j)=\alpha \sigma^k(u_i) x^{k+j}.
    \end{align}
    Since $\alpha \sigma^k(u_i)\in R^\times$, it follows that
    \begin{align}\label{mama}
        \operatorname{wt}_H\bigl(\varphi(x^{i+1})\bigr)=\operatorname{wt}_H(x^{k+j})
    \end{align}
    in $\mathcal R_{f_2}^r$, where $x^n=x^\ell+1$.

    If $j=j_0$, then $x^{k+j}=x^n=x^\ell+1$, so $\operatorname{wt}_H(x^{k+j})=2$ (since $\ell\geqslant 1$).
    If $j=j_1$, then $x^{k+j}=x^{n+1}=x^{\ell+1}+x$. If $\ell+1<n$, then $\operatorname{wt}_H(x^{k+j})=2$. If $\ell=n-1$, then $ x^{k+j}=x^{\ell+1}+x=x^{\ell}+1+x$, so $\operatorname{wt}_H(x^{k+j})=3$. Thus, in all cases,
    \begin{align}
        \operatorname{wt}_H(x^{k+j})\geqslant 2.
    \end{align}
    Hence, by \eqref{mama},
    \begin{align}\label{sa}
         \operatorname{wt}_H\bigl(\varphi(x^{i+1})\bigr)\geqslant 2.
    \end{align}
    On the other hand, since $i\leqslant n-2$, we have $i+1\le n-1$, and therefore by \eqref{tatko},
    \begin{align}
        \operatorname{wt}_H\bigl(\varphi(x^{i+1})\bigr)=1,
    \end{align}
    which contradicts \eqref{sa}. Hence $k=1$.

    Therefore, by \eqref{eqphi-x}, $\varphi(x)=\alpha x$ for some $\alpha\in R^\times$. Consequently, by Lemma \ref{norm-pro},
    \begin{align}
        \varphi(x^i)=(\varphi(x))^i=(\alpha x)^i=N_i^\sigma(\alpha)x^i,\quad 0\leqslant i\leqslant n-1.
    \end{align}
    Hence, since $\varphi(r) = r$ for all $r \in R$ by hypothesis, we have
    \begin{align}
        \varphi\Bigl(\sum_{i=0}^{n-1} g_i x^i\Bigr)=\sum_{i=0}^{n-1} g_i N_i^\sigma(\alpha)x^i.
    \end{align}
    If we define $\varphi_\alpha:=\varphi$, then Definition \ref{Def_Iso} leads to $a(x)\sim_{(n,\sigma)}x^\ell+1$.  
\end{itemize}
\end{proof}

 If one applies Theorem~\ref{Equiv_1} and then Theorem~\ref{Th.1} with $b(x)=x^\ell+1$, by considering that $x^\ell+1$ is the identity element of the group $(B_{\ell,\sigma},\star)$, it follows 
the corollary below which shows that the study of skew $(\sigma,f)$-polycyclic codes can be reduced to the case $x^n - (x^\ell + 1)$, with all equivalent families obtained via the 
subgroup $H_{\ell,\sigma}$.

\begin{corollary}\label{vital-eq2}
 Let $f_1(x)=x^n-a(x)$ and $f_2(x)=x^n-(x^\ell+1)$ be central trinomials with  $a(x)\in B_{\ell,\sigma}$.  Then, the class of skew $(\sigma,\,f_1)$-polycyclic codes is Hamming $(n,\sigma)$-equivalent to the class of skew $(\sigma,\,f_2)$-polycyclic codes if and only if
$a(x)\in H_{\ell,\sigma}$.
\end{corollary}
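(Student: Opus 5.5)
The plan is to chain Theorem~\ref{Equiv_1} and Theorem~\ref{Th.1}, taking $b(x)=x^\ell+1$. Here ``the class of skew $(\sigma,f_1)$-polycyclic codes is Hamming $(n,\sigma)$-equivalent to the class of skew $(\sigma,f_2)$-polycyclic codes'' is read as in item (2) of Theorem~\ref{Equiv_1}, i.e.\ via a Hamming $(n,\sigma)$-isometry fixing $R$ pointwise. Theorem~\ref{Equiv_1} then shows that this condition is equivalent to $a(x)\sim_{(n,\sigma)} x^\ell+1$. Both $f_1$ and $f_2$ are central by hypothesis, and $x^\ell+1\in B_{\ell,\sigma}$ because $1\in (R^\sigma)^\times$, so the hypotheses of Theorem~\ref{Th.1} are met with $b(x)=x^\ell+1$.

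By the equivalence (1)$\Leftrightarrow$(4) of Theorem~\ref{Th.1}, we get $a(x)\sim_{(n,\sigma)} x^\ell+1$ if and only if $a(x)\star (x^\ell+1)^{-1}\in H_{\ell,\sigma}$. Since $x^\ell+1$ is the identity $e(x)$ of the abelian group $(B_{\ell,\sigma},\star)$, its inverse is itself, so $a(x)\star(x^\ell+1)^{-1}=a(x)$. The condition therefore reduces to $a(x)\in H_{\ell,\sigma}$.

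As a sanity check, I would also verify the equivalence directly through item (2) of Theorem~\ref{Th.1}. That item reads: there exists $\alpha\in R^\times$ with $a_0=N_n^\sigma(\alpha)$ and $a_\ell=N_{n-\ell}^\sigma(\sigma^\ell(\alpha))$. This says literally that $a(x)=N_{n-\ell}^\sigma(\sigma^\ell(\alpha))x^\ell+N_n^\sigma(\alpha)\in H_{\ell,\sigma}$. This route avoids any question of whether $H_{\ell,\sigma}$ lies inside $B_{\ell,\sigma}$ or is a subgroup.

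The only point needing care is interpretive rather than computational. The corollary's phrase ``Hamming $(n,\sigma)$-equivalent'' must be understood with the $R$-fixing isometry of Theorem~\ref{Equiv_1}(2), since that is the only version whose converse the paper has proved. I would state this reading explicitly at the start of the proof. The direction $a(x)\in H_{\ell,\sigma}\Rightarrow$ equivalence holds under either reading, because $\varphi_\alpha$ fixes $R$.
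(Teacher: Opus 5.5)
Your proof is correct and follows the paper's own argument. The paper obtains this corollary exactly by applying Theorem~\ref{Equiv_1} and then Theorem~\ref{Th.1} with $b(x)=x^\ell+1$, using that $x^\ell+1$ is the identity of $(B_{\ell,\sigma},\star)$; your reading of ``Hamming $(n,\sigma)$-equivalent'' via the $R$-fixing isometry of Theorem~\ref{Equiv_1}(2) matches the paper's implicit reading, and your direct check through item (2) of Theorem~\ref{Th.1} is a valid extra confirmation.
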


%\begin{proof}
% We apply Theorem~\ref{Equiv_1} and then Theorem~\ref{Th.1} with $b(x)=x^\ell+1$, by considering that $x^\ell+1$ is the identity element of the group $(B_{\ell,\sigma},\star)$.   
%\end{proof}

\begin{corollary}\label{vital-eq}
Let $f_1(x)=x^n-(a_\ell x^\ell+a_0)$ and $f_2(x)=x^n-(x^\ell+1)$ be two trinomials with the following conditions:
\begin{enumerate}
    \item $n=\mu m$ for some positive integer $m$, and $\mu$ is the order of $\sigma$;
    \item $a_0,a_\ell \in (R^\sigma)^\times$;
    \item $\ell \equiv 0 \pmod{\mu}$.
\end{enumerate}
Then, the class of skew $(\sigma,\,f_1)$-polycyclic codes is Hamming $(n,\sigma)$-equivalent to the class of skew $(\sigma,\,f_2)$-polycyclic codes if and only if
$a_\ell x^\ell+a_0\in H_{\ell,\sigma}$.
\end{corollary}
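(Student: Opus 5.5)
The plan is to reduce Corollary~\ref{vital-eq} to Corollary~\ref{vital-eq2}. That means checking that, under hypotheses (1)--(3), both $f_1$ and $f_2$ are central trinomials in $R[x;\sigma]$ and that $a(x)=a_\ell x^\ell+a_0$ lies in $B_{\ell,\sigma}$. Once this is done, the equivalence stated in Corollary~\ref{vital-eq2} (obtained by combining Theorem~\ref{Equiv_1} and Theorem~\ref{Th.1} with $b(x)=x^\ell+1$, the identity of $(B_{\ell,\sigma},\star)$) gives the claim verbatim.

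First, membership in $B_{\ell,\sigma}$ is immediate from hypothesis (2): $a_0,a_\ell\in (R^\sigma)^\times$. Likewise $1\in (R^\sigma)^\times$, so $x^\ell+1\in B_{\ell,\sigma}$.

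Next, centrality. I will use item 1 of Lemma~\ref{Cor_Asigma}, whose setting is $\sigma$ of finite order $\mu$ and $n=\mu m$; hypothesis (1) supplies exactly this. It suffices to check that the coefficient vectors lie in $\mathcal{A}_\sigma$. For $f_1$ the vector is $(a_0,0,\dots,0,a_\ell,0,\dots,0)$, with $a_\ell$ in position $\ell$. All entries are in $R^\sigma$ by (2), and the only nonzero positions are $0$ and $\ell$. Both are $\equiv 0\pmod{\mu}$: position $0$ trivially, and position $\ell$ by hypothesis (3). Hence $f_1$ is central. The same argument applied to $(1,0,\dots,0,1,0,\dots,0)$ shows $f_2$ is central.

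Alternatively, one could verify centrality directly from the three conditions of Lemma~\ref{m20}:
\begin{itemize}
\item $\sigma(a_i)=a_i$ holds by (2);
\item $\sigma^n=\id_R$ holds since $\mu\mid n$;
\item $a_i(r-\sigma^i(r))=0$ holds because $\sigma^0=\sigma^\ell=\id_R$ when $\mu\mid\ell$.
\end{itemize}

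There is no genuine obstacle here. The only point requiring care is that the reduction relies on Corollary~\ref{vital-eq2}, which in turn uses Theorem~\ref{Equiv_1}. The equivalence of code classes must therefore be read as in Theorem~\ref{Equiv_1}(2), i.e., via isometries fixing $R$ pointwise, and I would state this interpretation explicitly. I would also remark that if $\mu=1$ (so $\sigma=\id$), the conditions are vacuous and centrality is automatic. With these checks in place, the result follows directly.
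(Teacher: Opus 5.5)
Your proposal is correct and follows the paper's proof exactly: centrality of $f_1$ and $f_2$ via Lemma~\ref{Cor_Asigma}, membership $a_\ell x^\ell+a_0\in B_{\ell,\sigma}$ from hypothesis~(2), and then an appeal to Corollary~\ref{vital-eq2}. Your extra remarks are sound refinements rather than a different route: the direct check through Lemma~\ref{m20} also covers $\mu=1$, where $\mathcal{A}_\sigma$ is not defined, and you read code-class equivalence as in Theorem~\ref{Equiv_1}(2), with $\varphi$ fixing $R$ pointwise.
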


\begin{proof}
By Lemma~\ref{Cor_Asigma}, the conditions {\rm{1}}-{\rm3{}} are sufficient to ensure that both trinomials $f_1(x)$ and $f_2(x)$ are central. Furthermore,  $a(x):=a_\ell x^\ell+a_0$ belongs to $B_{\ell,\sigma}$ by the condition {\rm{2}}. Now, we apply Corollary \ref{vital-eq2}.
\end{proof}
% ====================================================================
% Number of Hamming equivalence classes
% ====================================================================
\section{Number of Hamming equivalence classes}\label{sec:number}

In this section, we count the number of skew $(\sigma,f_1)$-polycyclic codes that are Hamming $(n,\sigma)$-equivalent to skew $(\sigma,f_2)$-polycyclic codes, where $f_1(x)=x^n-a(x)$ and $f_2(x)=x^n-(x^\ell+1)$ are central trinomials with $a(x)\in B_{\ell,\sigma}$. By Corollary~\ref{vital-eq2}, this class consists precisely of all $a(x)\in H_{\ell,\sigma}$, and hence has cardinality $|H_{\ell,\sigma}|$. We therefore begin by determining $|H_{\ell,\sigma}|$.

\begin{lemma}
Let $\sigma \in \operatorname{Aut}(R)$ be an automorphism of order $\mu$, and suppose that $n\equiv \ell \equiv 0 \pmod{\mu}$. Then $H_{\ell,\sigma}$ is a subgroup of $(B_{\ell,\sigma},\star)$.
\end{lemma}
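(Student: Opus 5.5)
The plan is to show that $H_{\ell,\sigma}$ is the image of a group homomorphism $\theta\colon R^\times \to (B_{\ell,\sigma},\star)$. Define
\begin{align*}
\theta(\alpha):=N_{n-\ell}^{\sigma}(\sigma^\ell(\alpha))\,x^\ell + N_n^\sigma(\alpha).
\end{align*}
By definition $H_{\ell,\sigma}=\theta(R^\times)$, and the image of a homomorphism is a subgroup. So two things remain to check: that $\theta$ lands in $B_{\ell,\sigma}$, and that it is multiplicative.

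\textbf{Step 1: $\theta$ lands in $B_{\ell,\sigma}$.} For $\alpha\in R^\times$, both coefficients are units by Lemma~\ref{norm-pro}(2), since $\sigma^\ell(\alpha)\in R^\times$. It remains to show they are fixed by $\sigma$.
\begin{itemize}
\item The constant term: $\mu\mid n$ gives $\sigma^n=\id_R$, so Lemma~\ref{norm-pro}(7) yields $N_n^\sigma(\alpha)\in R^\sigma$.
\item The $x^\ell$-coefficient: $\mu\mid n-\ell$ gives $\sigma^{n-\ell}=\id_R$, so Lemma~\ref{norm-pro}(7) applied to $\beta=\sigma^\ell(\alpha)$ gives $N_{n-\ell}^\sigma(\sigma^\ell(\alpha))\in R^\sigma$. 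Alternatively, since $\mu\mid\ell$ we have $\sigma^\ell(\alpha)=\alpha$, so this coefficient is simply $N_{n-\ell}^\sigma(\alpha)$.
\end{itemize}
In the degenerate case $\ell=n$, or if one prefers to avoid it, the hypothesis $1\le \ell\le n-1$ already rules it out. Hence $\theta(\alpha)\in B_{\ell,\sigma}$.

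\textbf{Step 2: $\theta$ is a homomorphism.} For $\alpha,\beta\in R^\times$, Lemma~\ref{norm-pro}(3) gives $N_n^\sigma(\alpha\beta)=N_n^\sigma(\alpha)N_n^\sigma(\beta)$. Since $\sigma^\ell$ is a ring automorphism, $\sigma^\ell(\alpha\beta)=\sigma^\ell(\alpha)\sigma^\ell(\beta)$, and Lemma~\ref{norm-pro}(3) again gives multiplicativity of the $x^\ell$-coefficient. Comparing with the Schur product \eqref{schur} yields $\theta(\alpha\beta)=\theta(\alpha)\star\theta(\beta)$.

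\textbf{Conclusion.} $H_{\ell,\sigma}$ contains $\theta(1)=x^\ell+1$. It is closed under $\star$ by Step~2, and closed under inverses because $\theta(\alpha)^{-1}=\theta(\alpha^{-1})$, which also follows from Lemma~\ref{norm-pro}(2). Hence it is a subgroup.

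The only real obstacle is Step~1, where the divisibility hypotheses on $n$ and $\ell$ are used to guarantee that the norms lie in $R^\sigma$. Without them, $H_{\ell,\sigma}$ need not even be contained in $B_{\ell,\sigma}$.
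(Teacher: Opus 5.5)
Your proposal is correct and follows the paper's proof essentially verbatim. The paper also defines $\theta\colon R^\times\to B_{\ell,\sigma}$, uses $\sigma^\ell=\sigma^n=\id_R$ together with Lemma~\ref{norm-pro} to place both coefficients in $(R^\sigma)^\times$, checks multiplicativity, and concludes that $H_{\ell,\sigma}=\operatorname{Im}(\theta)$ is a subgroup. Your separate check of the identity and of inverses is redundant once $\theta$ is known to be a homomorphism, but it is harmless.
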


\begin{proof}
Define
\begin{align}\label{ker-theta}
\theta : R^\times &\longrightarrow B_{\ell,\sigma}, \nonumber\\
\alpha &\longmapsto N_{n-\ell}^{\sigma}(\sigma^\ell(\alpha)) x^\ell + N_n^\sigma(\alpha).
\end{align}
Since $\sigma^\ell=\sigma^n=\id_R$, Lemma~\ref{norm-pro} implies that 
$N_n^\sigma(\alpha)\in (R^\sigma)^{\times}$ and $N_{n-\ell}^\sigma(\sigma^\ell(\alpha)) \in (R^\sigma)^{\times}$ for all $\alpha\in R^\times$. Hence $\theta$ is well-defined. Moreover, for $\alpha,\beta\in R^\times$, Lemma~\ref{norm-pro} yields
\begin{align}
\theta(\alpha\beta)= N_{n-\ell}^\sigma(\sigma^\ell(\alpha))\,N_{n-\ell}^\sigma(\sigma^\ell(\beta)) x^\ell
+ N_n^\sigma(\alpha)\,N_n^\sigma(\beta) =\theta(\alpha)\star \theta(\beta).
\end{align}
Thus $\theta$ is a group homomorphism. Therefore, $H_{\ell,\sigma}=\operatorname{Im}(\theta)$ is a subgroup of $(B_{\ell,\sigma},\star)$.
\end{proof}

In the following example, we show that if at least one of the conditions in the lemma fails, the set $H_{\ell,\sigma}$ is not necessarily a subgroup of $(B_{\ell,\sigma}, \star)$.

\begin{example}
Consider the Galois ring $R=\mathrm{GR}(4,2)\cong \mathbb{Z}_4[\omega]$, where $\omega$ satisfies $\omega^2+\omega+1=0$. Let $\sigma\in\operatorname{Aut}(R)$ be the Frobenius automorphism defined by $\sigma(a+b\omega)=a+b\omega^2$. Then $\sigma$ has order $\mu=2$, and its fixed subring is $R^\sigma=\mathbb{Z}_4$, so $(R^\sigma)^\times=\{1,3\}$. Consider $n=3$ and $\ell=2$. Then $n\not\equiv 0 \pmod{\mu}$. Let $\alpha=\omega\in R^\times$. We have
\begin{align}
N_{1}^\sigma(\sigma^2(\omega))=\omega \notin (R^\sigma)^\times,\quad N_3^\sigma(\omega)= \prod_{t=0}^{2}\sigma^t(\omega)= \omega \cdot \omega^2 \cdot \omega= \omega^4=\omega\notin (R^\sigma)^\times.
\end{align}
It follows that $\theta(\omega)\notin B_{\ell,\sigma}$, and hence $H_{\ell,\sigma}\not\subseteq B_{\ell,\sigma}$.
\end{example}

By \eqref{UnitGroup}, the unit group $R^\times$ decomposes as $R^\times = T^* \cdot U \cong T^* \times U$, where $U = 1 + \gamma R$ is a finite abelian $p$-group of order $p^{r(e-1)}$ and $T^*$ is a subgroup of order $p^r - 1$. The following lemma gives an explicit formula for the $\sigma$-norm on $T^*$. 

\begin{lemma}\label{lem:AutTeich}
Let $\sigma \in \aut(R)$ be an automorphism. Then the following holds:
\begin{enumerate}
    \item There exists an integer $k \in \{0, 1, \dots, r-1\}$ such that $\sigma(\xi) = \xi^{p^k}$ for all $\xi \in T^*$.
    \item For $i \geqslant 0$ and $\xi \in T^*$, the $\sigma$-norm of $\xi$ is given by
    \begin{align}
         N_i^\sigma(\xi)=\begin{cases}\xi^{\frac{p^{ik}-1}{p^k-1}}, &\text{if } p^k \neq 1, \\\xi^i, & \text{if } p^k =1.\end{cases}
    \end{align}
    
    \item For $\xi \in T^*$, the equality $N_i^\sigma(\xi)=1$ holds if and only if $\operatorname{ord}(\xi) \mid \gcd\!\left(\frac{p^{ik}-1}{p^k-1},\, q-1\right)$ when $p^k \ne 1$, or $\operatorname{ord}(\xi) \mid \gcd(i,\, q-1)$ when $p^k = 1$.
\end{enumerate}
\end{lemma}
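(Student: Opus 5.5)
For item 1, I will reduce to the residue field. By Lemma~\ref{stabilize}, $\sigma$ restricts to an automorphism of the cyclic group $T^*$. Moreover, $\sigma$ commutes with reduction modulo $\mathfrak m$, since it induces $\overline{\sigma}$. The canonical projection restricted to $T^*$ is a group isomorphism $T^*\to\mathbb F_q^\times$, because $T^*$ is a complete set of representatives of the nonzero residues. Now $\overline{\sigma}\in\operatorname{Gal}(\mathbb F_q/\mathbb F_p)$, so $\overline{\sigma}(y)=y^{p^k}$ for some $k\in\{0,\dots,r-1\}$ and all $y\in\mathbb F_q$. For $\xi\in T^*$, both $\sigma(\xi)$ and $\xi^{p^k}$ lie in $T^*$, and they have the same image $\overline{\xi}^{p^k}$ in $\mathbb F_q$. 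Injectivity of the projection on $T^*$ then gives $\sigma(\xi)=\xi^{p^k}$. The only point needing care is that $\overline{\sigma}$ is a ring automorphism of $\mathbb F_q$, hence a power of Frobenius. This is where the argument really lives, and it is standard.

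For item 2, I will use Lemma~\ref{norm-pro}(1): $N_i^\sigma(\xi)=\prod_{t=0}^{i-1}\sigma^t(\xi)$. By item 1, iterating gives $\sigma^t(\xi)=\xi^{p^{kt}}$. Hence $N_i^\sigma(\xi)=\xi^{\sum_{t=0}^{i-1}p^{kt}}$. When $p^k\neq 1$ (i.e.\ $k\ge1$), the exponent is the geometric sum $\frac{p^{ik}-1}{p^k-1}$. When $k=0$, every summand equals $1$ and the exponent is $i$. The case $i=0$ is consistent with both formulas, giving $N_0^\sigma(\xi)=1$.

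For item 3, write $E$ for the exponent found in item 2. Then $N_i^\sigma(\xi)=\xi^{E}=1$ if and only if $\operatorname{ord}(\xi)\mid E$. Since $\xi\in T^*$, we always have $\operatorname{ord}(\xi)\mid q-1$. Therefore $\operatorname{ord}(\xi)\mid E$ is equivalent to $\operatorname{ord}(\xi)\mid\gcd(E,q-1)$, which is the stated criterion in each of the two cases. I expect no real obstacle anywhere. The only subtle step is the identification of $\sigma|_{T^*}$ with a power map in item 1, done via the isomorphism $T^*\cong\mathbb F_q^\times$.
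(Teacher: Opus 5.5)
Your proposal is correct and follows the paper's proof essentially step for step. You identify $\overline{\sigma}$ as a power of Frobenius and use injectivity of the projection on $T^*$ (with Lemma~\ref{stabilize} placing $\sigma(\xi)$ in $T^*$); you then sum the geometric series coming from Lemma~\ref{norm-pro}(1), and finally use $\operatorname{ord}(\xi)\mid q-1$ to pass to the gcd. Your remarks that $\sigma^t(\xi)=\xi^{p^{kt}}$ follows by iterating item 1, and that $i=0$ is consistent with both formulas, are small clarifications the paper leaves implicit.
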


\begin{proof}\
\begin{enumerate}
    \item  The group $\operatorname{Aut}(\mathbb{F}_{p^r})$ is cyclic, generated by the Frobenius automorphism. Hence the induced automorphism $\overline{\sigma}$ is of the form $\overline{\sigma}(\bar{x})=\bar{x}^{p^k}$ for some $k\in\{0,\dots,r-1\}$. The canonical projection $T^*\to \mathbb{F}_q^\times$, $\xi\mapsto \overline{\xi}$, is a bijection because it is a surjection between two finite sets of the same size. For any $\xi\in T^*$, we have
    \begin{align}\label{mama}
        \overline{\sigma(\xi)}=\overline{\sigma}(\overline{\xi})=(\overline{\xi})^{p^k}=\overline{\xi^{p^k}}.
    \end{align}
    By Lemma~\ref{stabilize}, if $\xi \in T^*$, then $\sigma(\xi) \in T^*$.  Thus, both $\sigma(\xi)$ and $\xi^{p^k}$ belong to $T^*$. Since the canonical projection is an isomorphism, and hence injective, \eqref{mama} implies that $\sigma(\xi)=\xi^{p^k}$.

    \item By the definition of the $\sigma$-norm and part~(1), we obtain
    \begin{align}
    N_i^\sigma(\xi)= \sigma^{i-1}(\xi)\,\sigma^{i-2}(\xi) \cdots \sigma(\xi)\, \xi= \xi^{(p^k)^{i-1}}\, \xi^{(p^k)^{i-2}} \cdots \xi^{p^k}\, \xi^1=\xi^{\sum_{j=0}^{i-1} (p^k)^j}.
    \end{align}
   The exponent is a geometric sum with $i$ terms and common ratio $p^k$, which yields the desired formula.

   \item Since $T^*$ is a cyclic group of order $q-1$, the condition $\xi^m=1$ is equivalent to $\operatorname{ord}(\xi)\mid \gcd(m,q-1)$. The result follows immediately.
\end{enumerate}
\end{proof}

Since $U=1+\gamma R$ is a finite abelian $p$-group, the structure theorem for finite abelian groups implies that there exist uniquely determined integers $k_1,\dots,k_J\geqslant 1$, up to permutation, and elements $u_1,\dots,u_J\in U$, not necessarily unique, such that
\begin{align}\label{im61}
    U=\langle u_1\rangle\times\cdots\times\langle u_J\rangle \cong \mathbb Z_{p^{k_1}}\times\cdots\times \mathbb Z_{p^{k_J}},
\end{align}
where each $u_i$ has order $p^{k_i}$. Thus, every element $u\in U$ can be written uniquely as
\begin{align}\label{im62}
    u=u_1^{m_1}\cdots u_J^{m_J},\quad m_i\in \mathbb Z_{p^{k_i}}.
\end{align}
Recall the group homomorphism $\theta$ defined in \eqref{ker-theta}. Since the identity of $(B_{\ell,\sigma},\star)$ is $x^\ell+1$, its kernel is
\begin{align}
    \ker(\theta)=\left\{\alpha\in R^\times \colon N_{n-\ell}^{\sigma}(\sigma^\ell(\alpha))=1 \text{ and } N_n^\sigma(\alpha)=1\right\}.
\end{align}
The next theorem determines the cardinality of $\ker(\theta)$ and, consequently, that of $H_{\ell,\sigma}$. 

\begin{theorem}\label{proper-all}
Let $\sigma \in \operatorname{Aut}(R)$ be an automorphism of order $\mu$, and suppose that $n\equiv \ell \equiv 0 \pmod{\mu}$. Let $k$ be a non-negative integer as in Lemma~\ref{lem:AutTeich}. For $1\leqslant i\leqslant J$, set $a_i:=N_{n-\ell}^\sigma(u_i)$ and $b_i:=N_n^\sigma(u_i)$. Then:
\begin{enumerate}
    \item The cardinality of $\ker(\theta)$ is 
    \begin{align}
        |\ker(\theta)| = |\ker(\theta|_{T^*})| \cdot |\ker(\theta|_{U})|,
    \end{align}
    where 
    \begin{align}\label{result1}
    |\ker(\theta|_{T^*})|=
    \begin{cases}
    \gcd\!\left(\dfrac{p^{(n-\ell)k}-1}{p^k-1},\,\dfrac{p^{nk}-1}{p^k-1},\,q-1\right), & \text{if } p^k\neq 1,\\[1.2ex]
    \gcd(n,\,n-\ell,\,q-1), & \text{if } p^k=1.
    \end{cases}
    \end{align}
    and 
    \begin{align}
    |\ker(\theta|_U)|=\#\left\{(m_1,\dots,m_J)\in \bigoplus_{i=1}^J \mathbb Z_{p^{k_i}} :
    \prod_{i=1}^J a_i^{m_i}=1,\ 
    \prod_{i=1}^J b_i^{m_i}=1\right\};
    \end{align}

    \item The cardinality of $H_{\ell,\sigma}$ is
    \begin{align}\label{general-H}
    |H_{\ell,\sigma}|=\frac{p^{r(e-1)}(p^{r}-1)}{|\ker(\theta)|}.
    \end{align}
\end{enumerate}
\end{theorem}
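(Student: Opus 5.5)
The plan is to exploit the direct product decomposition $R^\times = T^*\cdot U\cong T^*\times U$ from \eqref{UnitGroup}, together with the fact, given by Lemma~\ref{stabilize}, that $\sigma$ stabilizes both $T^*$ and $U$. Since each $\sigma$-norm $N_i^\sigma$ is a product of $\sigma$-conjugates, it maps $T^*$ into $T^*$ and $U$ into $U$.

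First, I simplify $\theta$. Because $\ell\equiv 0\pmod{\mu}$, we have $\sigma^\ell=\id_R$, so $N_{n-\ell}^\sigma(\sigma^\ell(\alpha))=N_{n-\ell}^\sigma(\alpha)$. Hence
\begin{align*}
\ker(\theta)=\{\alpha\in R^\times : N_{n-\ell}^\sigma(\alpha)=1,\ N_n^\sigma(\alpha)=1\}.
\end{align*}

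Next, I establish the product formula for the kernel. Write $\alpha=\xi u$ with $\xi\in T^*$ and $u\in U$. By Lemma~\ref{norm-pro}(3), $N_i^\sigma(\alpha)=N_i^\sigma(\xi)N_i^\sigma(u)$, where the first factor lies in $T^*$ and the second in $U$. Since $T^*\cap U=\{1\}$, the product equals $1$ exactly when both factors equal $1$. Thus $\ker(\theta)=\ker(\theta|_{T^*})\cdot\ker(\theta|_U)$ is an internal direct product, which gives $|\ker\theta|=|\ker(\theta|_{T^*})|\cdot|\ker(\theta|_U)|$.

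For the $T^*$ factor, I use Lemma~\ref{lem:AutTeich}(2). For $\xi\in T^*$, the condition $\xi\in\ker\theta$ reads $\xi^{A}=\xi^{B}=1$, where $A=\frac{p^{(n-\ell)k}-1}{p^k-1}$ and $B=\frac{p^{nk}-1}{p^k-1}$ (respectively $A=n-\ell$ and $B=n$ when $p^k=1$). This is equivalent to $\xi^{\gcd(A,B)}=1$. In the cyclic group $T^*$ of order $q-1$, the number of solutions of $\xi^d=1$ is $\gcd(d,q-1)$, which yields \eqref{result1}.

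For the $U$ factor, I write $u=u_1^{m_1}\cdots u_J^{m_J}$ as in \eqref{im62}. Multiplicativity of the norms gives $N_{n-\ell}^\sigma(u)=\prod a_i^{m_i}$ and $N_n^\sigma(u)=\prod b_i^{m_i}$. Here $a_i^{m_i}$ is well defined for $m_i\in\mathbb Z_{p^{k_i}}$ because $a_i^{p^{k_i}}=N^\sigma_{n-\ell}(u_i^{p^{k_i}})=1$, and similarly for $b_i$. The uniqueness of the exponent vector then turns the count of $\ker(\theta|_U)$ into the stated count of tuples.

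Finally, part (2) follows from the first isomorphism theorem applied to $\theta$, using the preceding lemma that $H_{\ell,\sigma}=\operatorname{Im}\theta$ and $|R^\times|=p^{r(e-1)}(p^r-1)$.

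The main obstacle is the splitting step: one must justify that the norms respect the $T^*\times U$ decomposition, which is where Lemma~\ref{stabilize} and $T^*\cap U=\{1\}$ are essential. Everything else is bookkeeping in cyclic groups.
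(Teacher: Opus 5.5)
Your proposal is correct and follows essentially the same route as the paper. You split $\alpha=\xi u$ via $R^\times=T^*\cdot U$, use Lemma~\ref{stabilize} and $T^*\cap U=\{1\}$ to separate the kernel conditions, count the $T^*$-part with Lemma~\ref{lem:AutTeich} and $\gcd$'s in a cyclic group of order $q-1$, count the $U$-part via the decomposition \eqref{im62}, and conclude with the first isomorphism theorem. You also tidy two small points: you skip the paper's unnecessary detour through $R^\sigma$, and you verify that $a_i^{m_i}$ and $b_i^{m_i}$ are well defined for $m_i\in\mathbb Z_{p^{k_i}}$.
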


\begin{proof}\
 \begin{enumerate}
     \item Since $\sigma^\ell=\operatorname{Id}_R$, $\theta(\alpha)=N_{n-\ell}^\sigma(\alpha)x^\ell+N_n^\sigma(\alpha)$ for all $\alpha\in R^\times$. Moreover, since $\mu\mid (n-\ell)$, we have $\sigma^n=\sigma^{n-\ell}=\operatorname{Id}_R$, and hence by Lemma \ref{norm-pro},
     \begin{align}\label{stabi}
         N_n^\sigma(\alpha)\,\,, N_{n-\ell}^\sigma(\alpha)\in R^\sigma\quad \text{for all}\quad \alpha\in R.
     \end{align}
     Since $R^\times = T^* \cdot U$, every element $\alpha\in R^\times$ can be written uniquely as $\alpha=\xi u$ for some $\xi\in T^*$ and $u\in U$. By Lemma \ref{stabilize},  $N_n^\sigma(\xi),\,N_{n-\ell}^\sigma(\xi)\in T^*$ and $N_n^\sigma(u),\,N_{n-\ell}^\sigma(u)\in U$. Hence, \eqref{stabi} implies that
    \begin{align}\label{nie}
        N_n^\sigma(\xi),\, N_{n-\ell}^\sigma(\xi)\in T^*\cap R^\sigma,\quad N_n^\sigma(u),\, N_{n-\ell}^\sigma(u)\in U\cap R^\sigma.
    \end{align}
    Thus 
    \begin{align}\label{stabi2}
      N_n^\sigma(\xi) N_n^\sigma(u)\in (T^*\cap R^\sigma)\cdot (U\cap R^\sigma),\quad
      N_{n-\ell}^\sigma(\xi) N_{n-\ell}^\sigma(u) \in  (T^*\cap R^\sigma)\cdot (U\cap R^\sigma). 
    \end{align}
    Since $(R^\sigma)^\times = (T^*\cap R^\sigma)\cdot (U\cap R^\sigma)$ and $(T^*\cap R^\sigma)\cap (U\cap R^\sigma)=\{1\}$ every element of $(R^\sigma)^\times$ admits a unique factorization of this form. In particular, if $N_n^\sigma(\xi) N_n^\sigma(u)=1$ then by \eqref{nie}, we have
    \begin{align}
        N_n^\sigma(\xi)=\bigl(N_n^\sigma(u)\bigr)^{-1} \in (T^*\cap R^\sigma)\cap (U\cap R^\sigma)=\{1\},
    \end{align}
    and hence $N_n^\sigma(\xi)=1$ and $N_n^\sigma(u)=1$. The same argument applies to $N_{n-\ell}^\sigma$, and therefore
    \begin{align}
        N_n^\sigma(\xi)\,N_n^\sigma(u)=1
        &\Longleftrightarrow N_n^\sigma(\xi)=1 \text{ and } N_n^\sigma(u)=1,\label{imp23}\\
        N_{n-\ell}^\sigma(\xi)\,N_{n-\ell}^\sigma(u)=1
        &\Longleftrightarrow N_{n-\ell}^\sigma(\xi)=1 \text{ and } N_{n-\ell}^\sigma(u)=1.\label{imp22}
    \end{align} 
    On the other hand, since $\theta$ is a group homomorphism, we have $\theta(\alpha)=\theta(\xi)\star\theta(u)$. Thus, the definition of $\theta$ and relations \eqref{imp23} and \eqref{imp22} yield
    \begin{align}
        \theta(\alpha)=x^\ell+1\quad\Longleftrightarrow\quad  N_n^\sigma(\xi)= N_n^\sigma(u)=1= N_{n-\ell}^\sigma(\xi)= N_{n-\ell}^\sigma(u).
    \end{align}
        Therefore, $\alpha\in\ker(\theta)$ if and only if $\xi\in\ker(\theta|_{T^*})$ and $u\in\ker(\theta|_U)$. Since the decomposition $\alpha=\xi u$ with $\xi\in T^*$ and $u\in U$ is unique, the map $\ker(\theta)\rightarrow \ker(\theta|_{T^*})\times \ker(\theta|_U)$ given by $\xi u\longmapsto (\xi,u)$ is a well-defined bijection, and hence $ \ker(\theta)\cong \ker(\theta|_{T^*}) \times \ker(\theta|_{U})$. Therefore,
    \begin{align}
       |\ker(\theta)| = |\ker(\theta|_{T^*})| \cdot |\ker(\theta|_{U})|.
    \end{align}
    We first compute $|\ker(\theta|_{T^*})|$. Let $\xi\in T^*$. Then, $\xi\in \ker(\theta|_{T^*})$ if and only if $N_{n-\ell}^\sigma(\xi)=1$ and $N_n^\sigma(\xi)=1$. Assume first that $p^k\neq 1$, and set $A=(p^{(n-\ell)k}-1)/(p^k-1),$ and $B=(p^{nk}-1)/(p^k-1)$. By Lemma~\ref{lem:AutTeich}, $\ker(\theta|_{T^*})=\{\xi\in T^*:\xi^A=1 \text{ and } \xi^B=1\}$. Applying B\'ezout's identity, and applying the fact that $\gcd(A,B)$ divides both $A$ and $B$, we have an element $\xi\in T^*$ satisfies $\xi^A=1$ and $\xi^B=1$ if and only if $\xi^{\gcd(A,B)}=1$.  Therefore, $\ker(\theta|_{T^*})=\{\xi\in T^*:\xi^{\gcd(A,B)}=1\}$. In a cyclic group of order $q-1$, the number of solutions of $x^m=1$ is $\gcd(m,q-1)$. It follows that
    \begin{align}
        |\ker(\theta|_{T^*})|=\gcd(\gcd(A,B),q-1)=\gcd(A,B,q-1).
    \end{align}
    If $p^k=1$, the same argument yields $|\ker(\theta|_{T^*})|=\gcd(n,n-\ell,q-1)$. Therefore, we have \eqref{result1}.
    
    We next compute $|\ker(\theta|_{U})|$. From \eqref{im62}, every element $u\in U$ can be written uniquely in the form
    \begin{align}
        u=u_1^{m_1}\cdots u_J^{m_J},\quad m_i\in \mathbb Z_{p^{k_i}}.
    \end{align}
    Lemma \ref{norm-pro} yields
    \begin{align}
        N_{n-\ell}^\sigma(u)=\prod_{i=1}^J (N_{n-\ell}^\sigma(u_i))^{m_i}=\prod_{i=1}^J a_i^{m_i},\quad N_n^\sigma(u)=\prod_{i=1}^J (N_n^\sigma(u_i))^{m_i}=\prod_{i=1}^J b_i^{m_i}.
    \end{align}
    Therefore, $u\in\ker(\theta|_U)$ if and only if $\prod_{i=1}^J a_i^{m_i}=1$ and  $\prod_{i=1}^J b_i^{m_i}=1$. Since the representation of $u$ by the tuple $(m_1,\dots,m_J)$ is unique, the required cardinality is exactly the number of tuples satisfying these two relations, that is   
    \begin{align}
        |\ker(\theta|_U)|=\#\left\{(m_1,\dots,m_J)\in \bigoplus_{i=1}^J \mathbb Z_{p^{k_i}}\colon \prod_{i=1}^J a_i^{m_i}=1,\ \prod_{i=1}^J b_i^{m_i}=1\right\}.
    \end{align}

    \item The second statement follows from the First Isomorphism Theorem, and also from the cardinality $|R^\times|=p^{r(e-1)}(p^r-1)$
 \end{enumerate}
\end{proof}

We are now in a position to apply the results of this section to enumerate all central trinomials whose associated skew polycyclic codes are Hamming $(n,\sigma)$-equivalent to those defined by the central trinomial $x^n-(x^\ell+1)$.

\begin{theorem}
Let $\sigma \in \operatorname{Aut}(R)$ be an automorphism of order $\mu$, and suppose that $n\equiv \ell \equiv 0 \pmod{\mu}$. Let $x^n-(x^\ell+1)$
be a central trinomial in $R[x;\sigma]$. Then the number of central trinomials
\[
f(x)=x^n-a(x), \quad\text{with}\quad a(x)\in B_{\ell,\sigma},
\]
such that the class of skew $(\sigma,f)$-polycyclic codes is Hamming $(n,\sigma)$-equivalent to the class of skew $(\sigma,\, x^n-(x^\ell+1))$-polycyclic codes is
\begin{align}
|H_{\ell,\sigma}|=\frac{p^{r(e-1)}(p^r-1)}{|\ker(\theta)|},
\end{align}
where $|\ker(\theta)|$ is given by Theorem~\ref{proper-all}.
\end{theorem}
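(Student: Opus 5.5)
The plan is to assemble the statement from Corollary~\ref{vital-eq2}, the lemma showing that $H_{\ell,\sigma}=\operatorname{Im}(\theta)$ is a subgroup of $(B_{\ell,\sigma},\star)$, and Theorem~\ref{proper-all}. No new algebra should be needed; the work is mostly bookkeeping of the bijection between central trinomials and their binomial parts.

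\textbf{Step 1: identify trinomials with binomials.} A trinomial $f(x)=x^n-a(x)$ with $a(x)\in B_{\ell,\sigma}$ is determined by the pair $(a_\ell,a_0)\in\bigl((R^\sigma)^\times\bigr)^2$. Since $\ell\equiv n\equiv 0\pmod{\mu}$, Lemma~\ref{Cor_Asigma}(1) shows that every such $f$ is automatically central. So the map $a(x)\mapsto x^n-a(x)$ is a bijection from $B_{\ell,\sigma}$ onto the set of central trinomials being counted. The count therefore equals the number of $a(x)\in B_{\ell,\sigma}$ satisfying the equivalence condition.

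\textbf{Step 2: characterize the condition.} By Corollary~\ref{vital-eq2}, the class of skew $(\sigma,x^n-a(x))$-polycyclic codes is Hamming $(n,\sigma)$-equivalent to that of $x^n-(x^\ell+1)$ exactly when $a(x)\in H_{\ell,\sigma}$. The hypotheses of that corollary hold because both trinomials are central. Moreover, $H_{\ell,\sigma}\subseteq B_{\ell,\sigma}$ by the subgroup lemma, since $\theta$ takes values in $B_{\ell,\sigma}$. Hence the set being counted is precisely $H_{\ell,\sigma}$.

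\textbf{Step 3: compute its size.} Since $H_{\ell,\sigma}=\operatorname{Im}(\theta)$ and $\theta$ is a group homomorphism, the First Isomorphism Theorem gives $|H_{\ell,\sigma}|=|R^\times|/|\ker(\theta)|$. The unit-group decomposition \eqref{UnitGroup} gives $|R^\times|=p^{r(e-1)}(p^r-1)$, and Theorem~\ref{proper-all} supplies $|\ker(\theta)|$.

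\textbf{Expected obstacle.} The only delicate point is the interpretation in Step~2. Corollary~\ref{vital-eq2} is phrased via Theorem~\ref{Equiv_1}, which uses isometries fixing $R$ pointwise. I would state explicitly that ``Hamming $(n,\sigma)$-equivalent'' in this theorem is meant in that sense, so that the corollary applies verbatim. I would also note that distinct $a(x)$ give distinct trinomials, so the count introduces no multiplicities.
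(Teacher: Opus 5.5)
Your proposal is correct and takes essentially the same route as the paper, whose proof simply combines Corollary~\ref{vital-eq2} (the counted set is exactly $H_{\ell,\sigma}$) with Theorem~\ref{proper-all} (which gives the count $|R^\times|/|\ker(\theta)|$ via the First Isomorphism Theorem). Your extra bookkeeping makes explicit what the paper leaves implicit: every such trinomial is automatically central by Lemma~\ref{Cor_Asigma}, the correspondence $a(x)\mapsto x^n-a(x)$ introduces no multiplicities, and the equivalence must be read in the $R$-fixing sense of Theorem~\ref{Equiv_1}.
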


\begin{proof}
The result follows directly from Corollary~\ref{vital-eq2} and Theorem~\ref{proper-all}.
\end{proof}

The following example illustrates the computation of $|H_{\ell,\sigma}|$ and the enumeration of Hamming $(n,\sigma)$-equivalent skew polycyclic codes.

\begin{example}
Let $R$ be the chain ring in Example \ref{example 1} with automorphism $\sigma(a+bu)=a+3bu$, which has order $2$. Take $n=4$ and $\ell=2$. We compute $|H_{2,\sigma}|$.

First, we determine the structure of the group $U=1+\gamma R$. Each element of $U$ takes the form $\alpha=1+2k+bu$, where $k, b\in \mathbb Z_4$, and therefore $|U|=16$.  A direct computation using $u^2=2$ and $4u=0$ shows that $\alpha^4=1$ for every $\alpha\in U$.%\footnote{By the property of cyclic groups and group theory, if $g^n = e$ for some integer $n$, then the order of the element $g$ must divide $n$. Recall that the exponent of a group $U$ is defined as the smallest positive integer $m$ such that $g^m = e$ for every $g \in U$.}. Therefore, the only possible values for the exponent are $1, 2,$ or $4$.  
Since $(1+u)^2=3+2u\neq 1$ and $(1+u)^4=1$, the exponent of $U$ is exactly $4$. Hence, by the classification theorem for finite abelian groups %\footnote{Every finite abelian group $U$ of order $16$ ($2^4$) can be expressed as a product of cyclic groups: $U \cong \mathbb{Z}_{2^{k_1}} \times \mathbb{Z}_{2^{k_2}} \times \cdots \times \mathbb{Z}_{2^{k_n}}$ where the sum of the exponents $k_1 + k_2 + \dots + k_n = 4$.The exponent of a direct product is $\lcm\{k_1, k_2, \ldots, k_n\}$. Since the exponent of $U$ is exactly $4$, the largest cyclic component must be $\mathbb{Z}_4$.}
\begin{align}
   U\cong \mathbb Z_4\times \mathbb Z_4 \quad\text{or}\quad U\cong\mathbb Z_4\times \mathbb Z_2\times \mathbb Z_2.
\end{align}
To distinguish these possibilities, we count the elements of order dividing
$2$. For $\alpha=a+bu\in U$, with $a\in\{1,3,5,7\}$ and $b\in\mathbb Z_4$,
we have $\alpha^2=a^2+2b^2+2abu$. Thus $\alpha^2=1$ if and only if $a^2+2b^2\equiv 1 \pmod 8$ and $2ab\equiv 0 \pmod 4$. Since $a$ is odd, this is equivalent to $b\in\{0,2\}$. Hence, there are exactly
$8$ elements of $U$ satisfying $\alpha^2=1$. Since
$\mathbb Z_4\times\mathbb Z_4$ has only $4$ such elements, we conclude that $ U\cong \mathbb Z_4\times \mathbb Z_2\times \mathbb Z_2$. To find an explicit decomposition of $U$, we set $x=1+u$, $y=1+2u$, and $z=5$. Then $x^4=y^2=z^2=1$, and
\begin{align}
    \langle x\rangle=\{1,\ 1+u,\ 3+2u,\ 3+3u\},\quad \langle y,z\rangle=\{1,\ 5,\ 1+2u,\ 5+2u\}.
\end{align}
Moreover, $\langle x\rangle\cap\langle y,z\rangle=\{1\}$, so $|\langle x\rangle\langle y,z\rangle|=|\langle x\rangle|\,|\langle y,z\rangle|=16$. Since $|U|=16$, we obtain
%Since $\langle x\rangle\langle y,z\rangle\subseteq U$ and $|U|=16$, we conclude that $U=\langle x\rangle\langle y,z\rangle$. Finally, $\langle y,z\rangle$ has order $4$, and since $y$ and $z$ are distinct elements of order $2$, we have $\langle y,z\rangle=\langle y\rangle\times \langle z\rangle$
\begin{align}
    U=\langle x\rangle\times \langle y\rangle\times \langle z\rangle\cong \mathbb Z_4\times \mathbb Z_2\times \mathbb Z_2.
\end{align}

Next, we compute $|\ker(\theta|_U)|$. We have $\sigma(x)=1-u$, $\sigma(y)=1+2u$, $\sigma(z)=5$. Since $N_2^\sigma(u)=\sigma(u)u$ for any $u\in U$, we get $N_2^\sigma(x)=7$, $N_2^\sigma(y)=1=N_2^\sigma(z)$. Since $\sigma^2=\operatorname{Id}_R$, for every $u\in U$ we have $N_4^\sigma(u)=(N_2^\sigma(u))^2$.
Hence $N_4^\sigma(x)=N_4^\sigma(y)=N_4^\sigma(z)=1$. Every element of $U$ can be written uniquely as $x^{m_1}y^{m_2}z^{m_3}$, where $(m_1,m_2,m_3)\in \mathbb Z_4\oplus\mathbb Z_2\oplus\mathbb Z_2$. Therefore,
\begin{align}
|\ker(\theta|_U)|
&=\#\left\{(m_1,m_2,m_3)\in \mathbb Z_4\oplus \mathbb Z_2\oplus \mathbb Z_2 :
7^{m_1}=1\right\}.
\end{align}
Since $7^2=1$, this gives $|\ker(\theta|_U)|=2\cdot 2\cdot 2=8$.

Next, we compute $|\ker(\theta|_{T^*})|$. Since the residue field of $R$ is $\mathbb F_2$, we have $q=2$. Hence, the Teichm\"uller subgroup has order $|T^*|=q-1=1$, and therefore $T^*=\{1\}$. It follows immediately that $|\ker(\theta|_{T^*})|=1$.

Finally, combining the above, we obtain $|\ker(\theta)|=|\ker(\theta|_{T^*})|\cdot|\ker(\theta|_U)|=8$. Since $|R^\times|=p^{r(e-1)}(p^{r}-1)=16$, Theorem \ref{proper-all} gives
\begin{align}
|H_{2,\sigma}|=\frac{|R^\times|}{|\ker(\theta)|}=\frac{16}{8}=2.
\end{align}
Thus, the number of central trinomials $f_1(x)=x^4-a(x)$ with $a(x)=a_\ell x^\ell+a_0$ such that the class of skew $(\sigma,f_a)$-polycyclic codes is Hamming $(n,\sigma)$-equivalent to the class of skew $(\sigma,f_0)$-polycyclic codes is $|H_{2,\sigma}|=2$.

\end{example}

%====================================================================
% Conclusion
% ===================================================================
\section*{Conclusion}

We studied Hamming equivalence for skew polycyclic codes over finite chain rings defined by central trinomials. We introduced an $(n,\sigma)$-equivalence relation on the defining trinomials and characterized it algebraically via the subgroup $H_{\ell,\sigma}$ of the Schur-product group $(B_{\ell,\sigma},\star)$. This yields a criterion for determining when two such code families are Hamming $(n,\sigma)$-equivalent. In particular, we showed that the family of skew $(\sigma,f)$-polycyclic codes with $f(x)=x^n-(a_\ell x^\ell+a_0)$ is Hamming $(n,\sigma)$-equivalent to the family defined by $x^n-(x^\ell+1)$ whenever $a_\ell x^\ell+a_0\in H_{\ell,\sigma}$. We also computed $|H_{\ell,\sigma}|$ using the decomposition $R^\times=T^*\cdot(1+\gamma R)$, thereby enumerating the corresponding central trinomials. Future work may consider analogous equivalence notions for the rank and sum-rank metrics.

\printbibliography
\end{document}